%
\documentclass[runningheads,anonymous]{llncs}
\usepackage{graphicx}
\usepackage{hyperref}

%

\usepackage[subtle]{savetrees}
\usepackage{amssymb}
\usepackage{lmodern}        
\usepackage[T1]{fontenc}  
\usepackage[utf8]{inputenc} 

\usepackage{placeins} 
\usepackage{booktabs} 
\usepackage{multirow} 
\usepackage{xcolor}

\usepackage{paralist} 
\usepackage{verbatim} 

\usepackage{array} 
\usepackage{mathtools} 
\usepackage{amssymb} 
\usepackage{algorithm}
\usepackage{thmtools,thm-restate}
\usepackage[noend]{algpseudocode}
\usepackage{subcaption}
\usepackage[pdf]{graphviz}


\newcommand{\Alltrans}{Q_E\times Act_E\times Q_E}
\newcommand{\para}[1]{\noindent\textbf{#1}}

\newif\iflong

%

\begin{document}
\title{Safe Environmental Envelopes of Discrete Systems
 }
\titlerunning{Safe Environmental Envelopes of Discrete Systems}
%
\author{}
\institute{}
\author{R\^omulo Meira-G\'oes\inst{1} 
\and
Ian Dardik\inst{2} \and
Eunsuk Kang\inst{2} \and
St\'ephane Lafortune\inst{3} \and\\
Stavros Tripakis\inst{4}
}
\authorrunning{R. Meira-G\'oes et al.}
\institute{
School of EECS,\\ Pennsylvania State University, State College USA\\ \email{romulo@psu.edu}\and 
School of Computer Science,\\ Carnegie Mellon University, Pittsburgh USA\\ \email{\{idardik,eunsukk\}@andrew.cmu.edu}\and
EECS Department, University of Michigan, Ann Arbor USA\\
\email{stephane@umich.edu}\\
\and
Khoury College of Computer Science, Northeastern University, Boston USA\\
\email{stavros@northeastern.edu}}
\maketitle              

\begin{abstract}
A safety verification task involves verifying a system against a desired safety property under certain assumptions about the environment.
However, these environmental assumptions may occasionally be violated due to modeling errors or faults.
Ideally, the system guarantees its critical  properties even under some of these violations, i.e., the system is \emph{robust} against environmental deviations.
This paper proposes a notion of \emph{robustness} as an explicit, first-class property of a transition system that captures how robust it is against possible \emph{deviations} in the environment.
We modeled deviations as a set of \emph{transitions} that may be added to the original environment.
Our robustness notion then describes the safety envelope of this system, i.e., it captures all sets of extra environment transitions for which the system still guarantees a desired property.
We show that being able to explicitly reason about robustness enables new types of system analysis and design tasks beyond the common verification problem stated above. 
We demonstrate the application of our framework on case studies involving a radiation therapy interface, an electronic voting machine, a  fare collection protocol, and a medical pump device.
\keywords{Robustness  \and Discrete Transition Systems \and Model Uncertainty.}
\end{abstract}

\graphicspath{{Figs/}}

\section{Introduction}

A common type of verification task involves verifying a system ($C$) against a desired property ($P$) under certain assumptions about the environment ($E$); i.e., $C || E \models P$. 
Such assumptions may capture, for example, the expected behavior of a human operator in a safety-critical system,  the reliability of the communication channel in a distributed system, or the capabilities of an attacker. However, the actual environment ($E'$) may occasionally deviate from the original model ($E$), due to changes or faults in the environment entities (e.g., errors committed by the operator or message loss in the channel). 
For certain types of deviations, a system that is \emph{robust} would ideally be able to guarantee the property even under the deviated environment; i.e., $C || E' \models P$.

This paper proposes the notion of \emph{robustness} as an explicit, first-class property of a transition system that captures how robust it is against possible \emph{deviations} in the environment. 
A deviation is modeled as a set of \emph{extra transitions} that may be added to the original environment, resulting in a new, deviated environment $E'$ that has a larger  set of behaviors than $E$ does. 
Then, system $C$ is said to be \emph{robust} to this deviated environment with respect to $P$ if and only if it can still guarantee $P$ even in presence of the deviation.
Finally, the overall \emph{robustness} of $C$ with respect to $E$ and $P$, denoted $\Delta$, is the largest set of   deviations that the system is robust against.

Conceptually, $\Delta$ defines the safe operating envelopes of the system: As long as the deployment environment remains within these envelopes, the system can guarantee a desired property. 
Being able to explicitly reason about $\Delta$ enables new types of system analysis and design tasks beyond the common verification problem stated above. 
Given a pair of alternative system designs, $C_1$ and $C_2$, one could rigorously compare them with respect to their robustness levels; they both may satisfy property $P$ under the normal operating environment $E$, but one may be more robust to deviations than the other. 
Given two properties, $P_1$ and $P_2$ (the latter possibly more critical than the former), one could check whether the system would continue to guarantee $P_2$ under a deviated environment even if it fails to do so for $P_1$. 
Finally, given $E$, $P$, and a desired level of robustness, $\Delta$, one could \emph{synthesize} machine $C$ to be robust to $\Delta$.

In this paper, we formalize (1) the proposed notion of robustness and (2) the problem of computing $\Delta$ for given $C$, $E$, and $P$.
One approach to automatically compute $\Delta$ is a brute-force method that enumerates all possible sets of deviations; however, as we will show, this approach is impractical, as the number of deviations is exponential in the size of the environment. 
To mitigate this, we present an approach for computing $\Delta$ by reduction to a controller synthesis problem~\cite{Pnueli:1989a,Ramadge:1987}.

We have built a prototype of the proposed approach for computing robustness and applied it to several case studies, including models of (1) a radiation therapy interface, (2) an electronic voting machine, (3) a public transportation fare collection protocol, and (4) a medical pump device. 
Our results show that our approach is capable of computing $\Delta$ to provide information about deviations under which these systems are able to guarantee their critical safety properties.

The contributions of this paper are as follows: (i) A novel, formal definition of robustness against environmental deviations (Sect.~\ref{sect:robust-notion}); (ii) A simple, brute-force method for computing robustness and a more efficient approach based on controller synthesis (Sect.~\ref{sect:computing-robustness}); and
(iii) A prototype tool for computing $\Delta$ and an experimental evaluation on several case studies (Sect.~\ref{sect:experiments}).

\section{Motivating example}\label{sect:motivating}

As a motivating example, we consider the Therac-25 radiation therapy machine.
This machine is infamous for a design flaw  that caused radiation overdoses, several of which led to the deaths of patients who received treatment \cite{Leveson:1993}.
In this section, we introduce a model for the Therac-25 based on the descriptions in \cite{Leveson:1993} and discuss several methods for analyzing its safety.
We show that robustness provides a generally richer analysis than classic verification.

\subsubsection{System }
We model the Therac-25 as the composition of the following three finite-state machines:
(1) $C_{term}$, a computer terminal that nurses use to operate the Therac-25,
(2) $C_{beam}$, a beam-emitter that fires a radiation treatment beam in either \textit{X-ray} or \textit{electron} mode, and
(3) $C_{turn}$, a turntable that rotates between two hardware components called the \textit{flattener} and the \textit{spreader}.
Formally, we define the Therac-25 as the composition all three machines:  $C_{T25} = C_{term}||C_{beam}||C_{turn}$.
We show the terminal and turntable in Figs. \ref{fig:therac-terminal} and \ref{fig:therac-turntable} respectively.
We show the beam in Sect. \ref{sect:therac} (Fig. \ref{fig:therac-beam}), where we present a case study on the Therac-25.

\para{Environment }
Nurses operate the Therac-25 by typing at a keyboard connected to a terminal. 
A nurse begins by choosing a beam mode by typing either an ``x'' for X-ray or an ``e'' for electron mode.
The nurse then hits the ``enter'' key and waits for the terminal to display ``beam ready'' before finally pressing the ``b'' key to fire the beam.
This workflow defines the operating environment which we call $E$, shown in Fig.~\ref{fig:therac-env}.

\para{Safety property }
Since the X-ray beams contain a high concentration of radiation, it is imperative that the flattener is in place when the machine fires an X-ray.
We capture this key safety property in the following LTL \cite{Pnueli:1977} formula:
$$\textbf{G}\big(\text{XFIRED}\ \rightarrow \ \text{FLATMODE}\big)$$
In this formula, XFIRED is a predicate that is true if an X-ray beam was just fired, while FLATMODE is a predicate that is true when the turn table is in flattener mode.
We refer to this safety property as $P_{xflat}$ in this example.

\begin{figure}[!t]
\begin{subfigure}[b]{0.48\textwidth}
    \centering
    \includegraphics[height=75px]{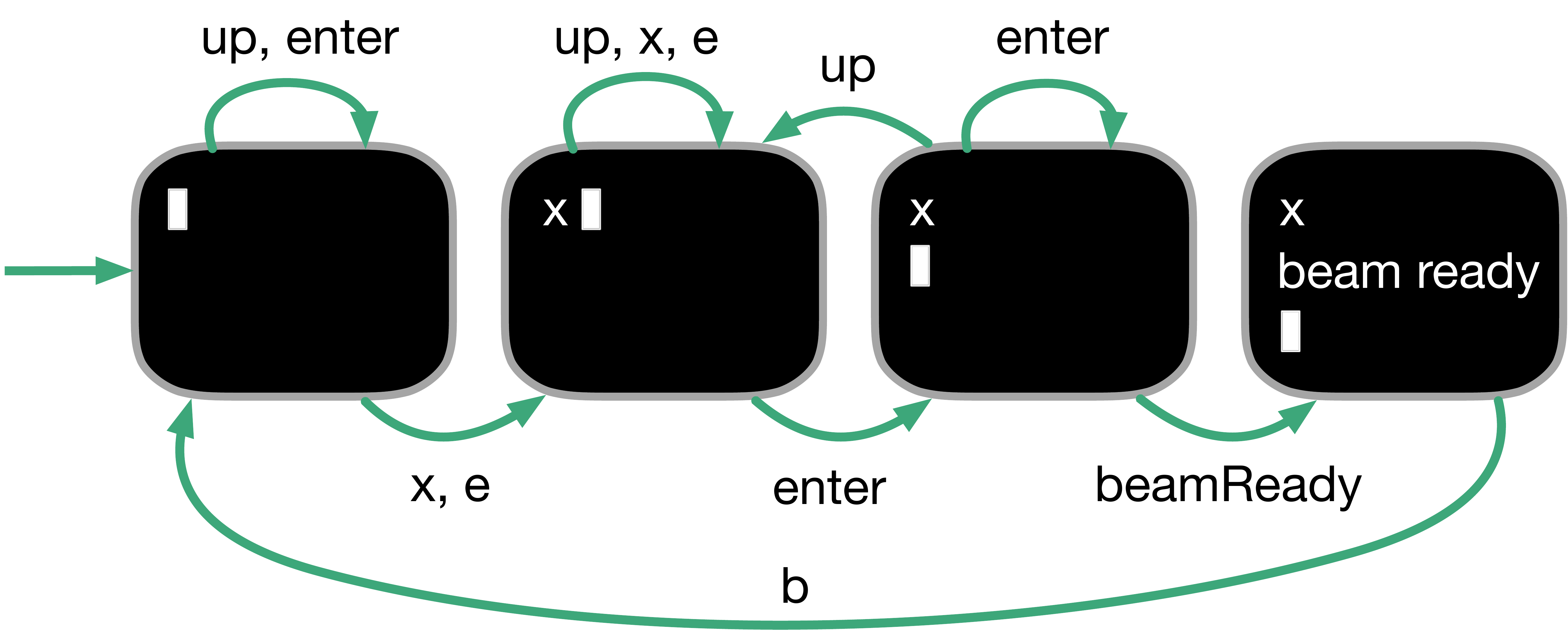}
    \caption{The operating terminal, $C_{term}$.}
    \label{fig:therac-terminal}
\end{subfigure}
\hspace{25px}
\enskip
\begin{subfigure}[b]{0.48\textwidth}
    \centering
    \includegraphics[height=100px]{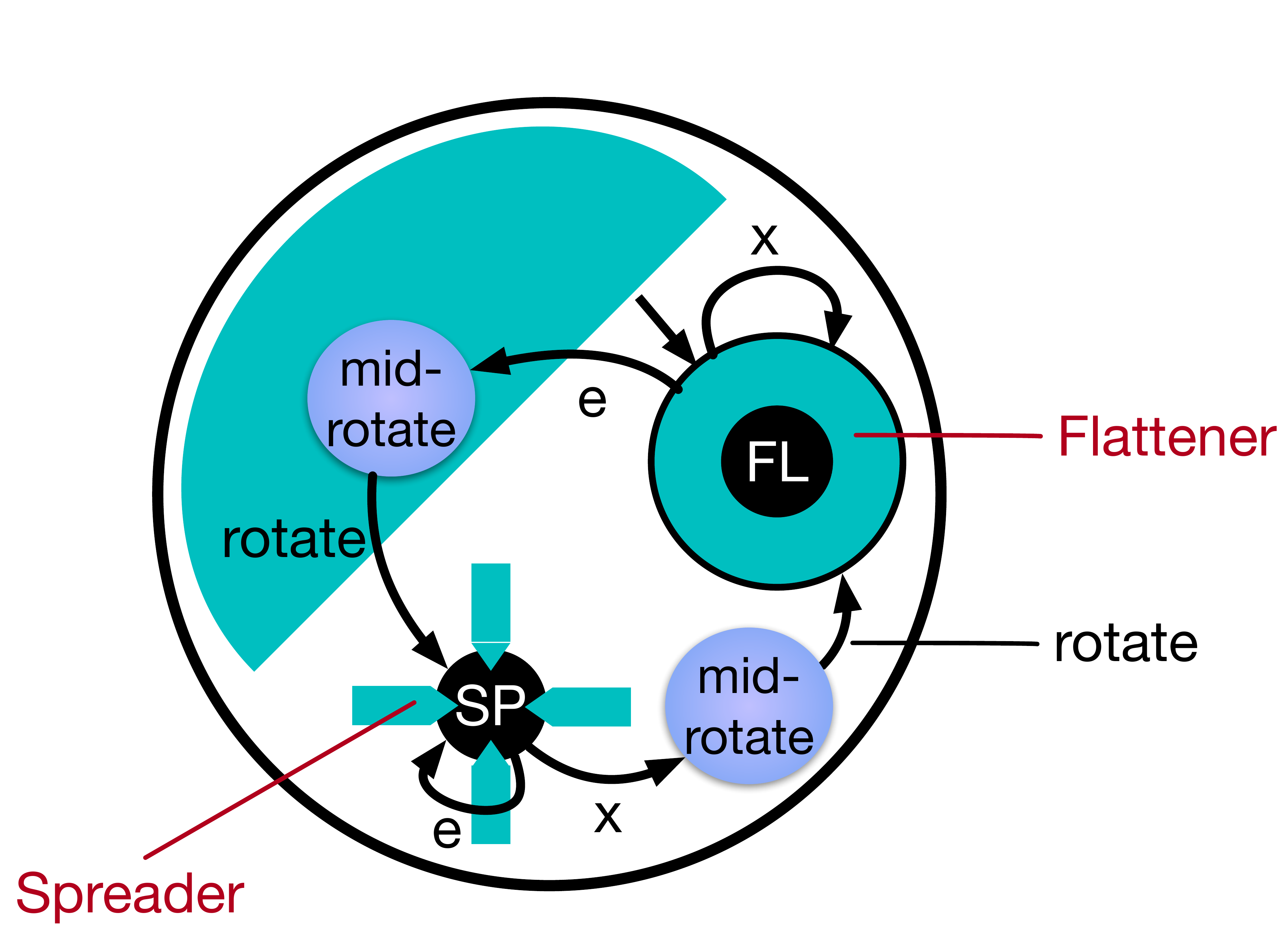}
    \caption{The turntable, $C_{turn}$.}
    \label{fig:therac-turntable}
\end{subfigure}
\enskip
\begin{subfigure}[b]{\textwidth}
    \vspace{10px}
    \centering
    \includegraphics[height=30px]{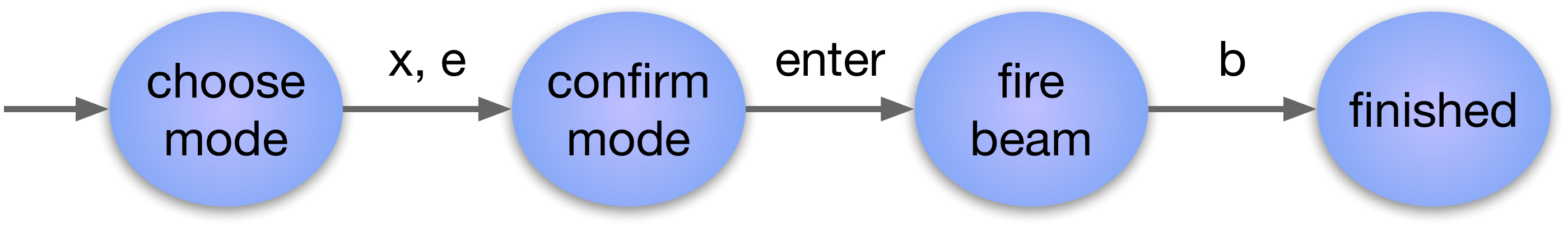}
    \caption{The normative environment, $E$}
    \label{fig:therac-env}
\end{subfigure}
\caption{The Therac-25 is modeled as $C_{T25}=C_{term}||C_{beam}||C_{turn}$.  $C_{beam}$ is in Figure \ref{fig:therac-beam}.
}
\label{fig:therac-term-turn}
\end{figure}
\para{Safety Analyses }
Robustness opens our safety analysis beyond classic verification.
We discuss several analysis options below.

\para{(1) Standard Verification:}
We can check that the Therac-25 is safe within the operating environment, that is, $E||C_{T25} \models P_{xflat}$.
Standard model checking techniques \cite{Baier:2008} show that the Therac-25 is indeed safe with respect to $E$.

\para{(2) Robustness Calculation:}
Given that the Therac-25 is safe with respect to $E$, we can calculate its robustness $\Delta$.
This calculation identifies the set of safe environmental envelopes of the Therac-25.
Importantly, these envelopes reveal the environmental deviations that the Therac-25 can safely handle.
For example, in Sect.~\ref{sect:therac}, we show that the Therac-25 is robust against the environmental deviations in Fig. \ref{fig:therac-tol} in which a nurse repeatedly hits ``enter'' or the ``up'' arrow key after choosing a beam mode.

\para{(3) Controller Comparison:}
Holding the environment $E$ and the property $P_{xflat}$ constant, we can compare the robustness of the Therac-25 against other models.
In Sect.~\ref{sect:therac}, we introduce the Therac-20 ($C_{T20}$) and compare the robustness between $C_{T25}$ and $C_{T20}$.
Although both machines are safe with respect to the normative environment, we will find that $C_{T25}$ is strictly less robust than $C_{T20}$.
We will show how contrasting the robustness between the two machines exposes a critical software bug in the Therac-25.
Furthermore, we will show that fixing the bug in the Therac-25 causes its robustness to be equivalent to the Therac-20.

\para{(4) Property Comparison:}
Holding the environment $E$ and the machine $C_{T25}$ constant, we can compare the machine's robustness with respect to $P_{xflat}$ and a second safety property.
For example, we could consider a new safety property $P'$ that strengthens $P_{xflat}$ by additionally enforcing the spreader to be in place when a beam is fired in electron mode.
The property $P'$ might be of interest to avoid an \textit{underdose}, a situation that might result from the flattener being in place when an electron beam is fired.
Because $P'$ is stronger than $P_{xflat}$, a designer may be interested to compare the robustness between the properties to understand which environmental deviations maintain $P_{xflat}$, but violate $P'$.
\section{Modeling formalism}\label{sect:preliminaries}
This section describes the underlying formalism used to model the environment, controlled systems, and the properties enforced by them.

\subsubsection*{Labeled transition systems}
Given a finite set $A$, the usual notations $|A|$ and $A^*$ denote the cardinality of $A$ and  the set of all finite sequences over $A$ respectively.
In this work, we use finite labeled transition systems to model the behavior of the environment, the controller, and the property.
\begin{definition} 
A \emph{labeled transition system} (LTS) $E$ is a tuple $\langle Q_E, Act_E, R_E, q_{0,E}\rangle$, where $Q_E$ is a finite set of states, $Act_E$ is a finite set of actions, $R_E\subseteq Q_E\times Act_E\times Q_E$ is the transition relation of $E$, and $q_{0,E}\in Q_E$ is the initial state.
\end{definition}
LTS $E$ is said to be deterministic if for any $(q,a,q'),(q,a,q'')\in R_E$, then $q'=q''$; otherwise it is nondeterministic.
We extend the transition relation $R_E$ to finite sequences of actions as ${R_E}^*\subseteq Q_E\times {Act_E}^*\times Q_E$ in the usual manner.
A \emph{trace} of $E$ is a finite sequence of actions $a_0\dots a_n$ of $E$ complying with the transition in ${R_E}^*$, i.e., $(q_{0,E}, a_0\dots a_n, q)\in {R_E}^*$ for some $q\in Q_E$.
The set of all traces in $E$ is denoted by $beh(E)$.

Given LTSs $E_1$ and $E_2$, the parallel composition $||$ defines standard synchronization of $E_1$ and $E_2$ \cite{Baier:2008,Lafortune:2021}.
The composed LTS $E_1||E_2 = \langle Q_{E_1}\times Q_{E_2}, Act_{E_1}\cup Act_{E_2}, R_{E_1||E_2},\allowbreak (q_{0,E_1},q_{0,E_2})\rangle$ synchronizes over the common actions between $E_1$ and $E_2$ and interleaves the remaining actions.
Lastly, given LTSs $E_1$ and $E_2$, we say that $E_1$ is a subset of $E_2$, denoted $E_1\subseteq E_2$, if $Q_{E_1}\subseteq Q_{E_2}$, $Act_{E_1}= Act_{E_2}$, $R_{E_1}\subseteq R_{E_2}$, and $q_{0,E_1}= q_{0,E_2}$.

\subsubsection*{Control strategy}
Let an LTS $E$ represent the environmental model to be controlled.
A control strategy, or simply \emph{controller}, for $E$ is a function that maps a finite sequence of actions to a set of actions, i.e., $C:{Act_E}^{*}\rightarrow 2^{Act_E}$.
A \emph{controlled trace} of $E$ is a trace of $E$, $a_0\dots a_n\in beh(E)$, such that $a_{i} \in C(a_0\dots a_{i-1})$ for any $i\leq n$.
The set of all controlled runs, denoted by $beh(E/C)$, defines the closed-loop system of $C$ controlling $E$.
For convenience, this closed-loop system is denoted by $E/C$.
In this work, we assume that controller $C$ has finite memory and it can be represented by a deterministic LTS.
With an abuse of notation, the LTS controller representation is also denoted by $C$.
For convenience, we define controller $C = \langle Q_C,Act_C,R_C,q_{0,C}\rangle$ to have the same actions as in $E$, i.e., $Act_C = Act_E$.
In this manner, the closed-loop system $E/C$ can be represented by the composition of environment $E$ and controller $C$: $E/C = E||C$.

\begin{remark}
We assume that all elements of the set of actions $Act_E$ are ``controllable'' actions, that can be acted upon by a controller.
However, the nondeterministic transition relation of $E$ can be used to model uncontrollable actions of the environment.
After an action $a$ is selected by the controller at state $q$, the environment decides which state the system will be in, similarly to two-player games \cite{Gradel:2002}.
\end{remark}


\subsubsection*{Safety property}
In this work, we consider a  class of regular linear-time properties called safety properties over an environment $E$ \cite{Baier:2008}.
A safety property $P$ is represented by a deterministic LTS $P$ that defines the set of accepted behaviors.
Usually, the LTS $P$ encodes both the traces that satisfy $P$ and those that violate it by including a sink error state.
Formally, any trace that reaches the error state $err\in Q_P$ violates the safety property.
An LTS $E$ satisfies property $P$, denoted by $E\models P$, whenever the traces in $beh(E)$ do not reach the error state in $P$.
In this manner, we can test if $E\models P$ by composing $E||P$ and investigating if the $err$ is reached.

\begin{example}\label{ex:running}
We describe a simple example that we use as a running example throughout the paper. 
Figure~\ref{fig:ex-algo-constr} depicts the environment $E$, controller $C$, and property $P$ considered in this example.
The environment $E$ defines that action $a$ is immediately followed by action $b$.
Although controller $C$ in Fig.~\ref{fig:ctr-alg-lts} only shows action $a$, we assume that $Act_C = \{a,b\}$.
In this manner, $C$ only allows action $a$ to occur.
Lastly, property $P$ defines that action $a$ should happen at most two times while action $b$ should never happen.
It follows that $E/C\models P$ since the controller disables action $b$ and the environment only executes one instance of action $a$.
\end{example}
\begin{figure}[!t]
\centering  
\begin{subfigure}[b]{0.35\textwidth}
        \centering
        \includegraphics[width=.5\textwidth]{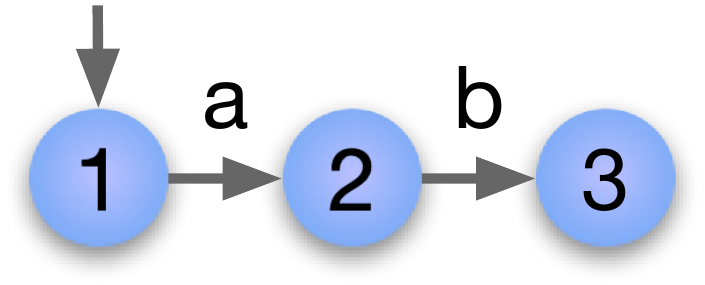}
        \caption{Environment $E$}
        \label{fig:env-alg-lts}
\end{subfigure}
\enskip
\begin{subfigure}[b]{0.2\textwidth}
        \centering
        \includegraphics[width=.41\textwidth]{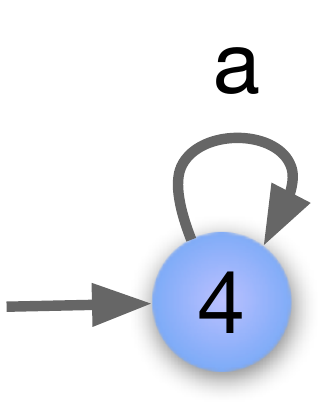}
        \caption{Controller $C$}
        \label{fig:ctr-alg-lts}
\end{subfigure}
\begin{subfigure}[b]{0.35\textwidth}
        \centering
        \includegraphics[width=.81\textwidth]{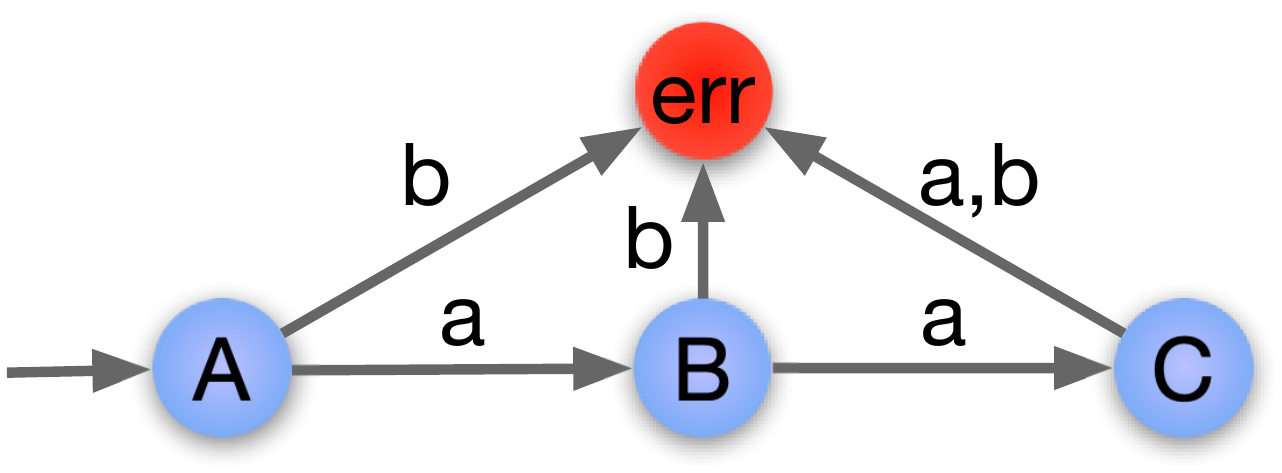}
        \caption{Property $P$}
        \label{fig:prop-alg-lts}
\end{subfigure}
\caption{LTSs for the running example}
\label{fig:ex-algo-constr}
\end{figure}

\section{Robustness against environmental deviations} \label{sect:robust-notion}
\subsection{Deviations}
\emph{A deviation} is a set of transitions $d\subseteq (Q_E\times Act_E\times Q_E)$
A \emph{deviated system} is defined by augmenting the transitions of environment $E$ with a deviation set:  
\begin{definition}
Given an LTS $E = \langle Q_E, Act_E, R_E, q_{0,E}\rangle$ and a deviation $d \subseteq Q_E\times Act_E \times Q_E$.
We define the \emph{deviated system} $E_{d}$ as $E_{d}:=\langle Q_E, Act_E, R_E\cup d, q_{0,E}\rangle$.
\end{definition}

A controller $C$ that guarantees property $P$ for environment $E$, i.e., $E/C\models P$, might violate this property for the deviated environment $E_d$, i.e., $E_d/C\not\models P$. 
\begin{definition}\label{def:robust-dev}
Controller $C$ is a \emph{robust controller} with respect to environment $E$, deviation $d$, and property $P$ if $E_{d}/C\models P$.
Deviation $d$ is a \emph{robust deviation} with respect to $E$, $C$, and $P$ if $C$ is a robust controller with respect to $E$, $d$, and $P$.
\end{definition}
\begin{remark}
In this paper, we are only interested in ensuring safety properties over the controlled system.
For this reason, it is sufficient to only consider adding new transitions to the environment.
If a controlled system is safe, then deleting transitions from the environment does not violate the safety property. 
\end{remark}

\subsection{Comparing deviations}\label{sect:comparing-perts}
Each deviation set affects the environment in different ways.
To reason about the effects of each deviation set, we compare them using a partial order relation over $Q_E\times Act_E\times Q_E$.
For deviations $d_1$ and $d_2$ such that $d_1\subseteq d_2$, $d_2$ deviates LTS $E$ more than $d_1$ since $beh(E_{d_1})\subseteq beh(E_{d_2})$.
For this reason, we select the relation $\subseteq$ over $Q_E\times Act_E\times Q_E$ to be the partial order to compare different deviation sets.

\begin{definition}\label{def:dev-order}
\hspace*{-.1cm}Given $E$ and deviations $d_1, d_2$, $d_1$ is \emph{at least as powerful} as $d_2$ if $d_2\subseteq d_1$.
\end{definition}
%

\subsection{Robustness}
Intuitively, robustness is defined as the set of all possible robust deviations $d$ with respect to the environment $E$, controller $C$, and safety property $P_{saf}$.
Additionally, we introduce an environmental constraint, $P_{env}$, to capture domain knowledge about the system under analysis.
$P_{env}$ will filter environment deviations that might not be physically feasible or of interest to analyze.
This constraint is captured as a safety property over $E$, i.e., $E\models P_{env}$ states that the environment satisfies the constraint.
Formally, our robustness notions is defined as follows:
\begin{definition} \label{def:robustness}
Let environment $E$, controller $C$, property $P_{saf}$ such that $E/C\models P_{saf}$, and environment constraint $P_{env}$ such that $E\models P_{env}$ be given.
The robustness of controller $C$ with respect to $E$, $P_{saf}$, and $P_{env}$, denoted by $\Delta(E,C,P_{saf},P_{env})$, is a set of robust deviations $\Delta\subseteq 2^{\Alltrans}$.
$\Delta$ is defined to be the (unique) set of robust deviations satisfying the following conditions:
\begin{enumerate}
\item $\forall d \in \Delta.\ E_{d}/C\models P_{saf}$  \emph{[$d$ is robust]};
\item $\forall d\subseteq \Alltrans .E_{d}/C\models P_{saf}\wedge E_{d}\models P_{env}\Rightarrow\exists d'\in \Delta.d\subseteq d'$ \emph{[$d$ is represented]}; 
\item $\forall d,d'\in \Delta.\ d\neq d'\Rightarrow d \not\subseteq d'$ \emph{[unique representation]}.
\item $\forall d\in \Delta.\ E_d\models P_{env}$ \emph{[d is feasible]}.
\end{enumerate}
When $E,C, P_{saf}$, and $P_{env}$ are clear from context, we simply write $\Delta$.
The set $\Delta$ is also denoted as the safety envelope of $C$ with respect to $E$, $P_{saf}$, and $P_{env}$.
\end{definition}
Intuitively, the set $\Delta$ defines an upper bound on the possible deviations from $E$ that controller $C$ is robust against.
In other words, $\Delta$ captures the envelopes for which controller $C$ remains safe.

If a designer does not have domain knowledge about the system, then $P_{env}$ can be set to not constrain the environment, i.e., $P_{env} = Act_E^*$.
After computing $\Delta$ without environmental constraints, a designer can obtain important information about the system and the environment.
In the next analysis iteration, this knowledge can be transformed into environmental constraints to enhance the robustness analysis, i.e., $P_{env} \subseteq Act_E^*$. 

By definition, $\Delta$ is always non-empty since $d = \emptyset$ is always robust.
Moreover, due to conditions 2 and 3, only maximal robust deviations are included in $\Delta$.
We show that there is a unique set of deviations that satisfies the conditions of Def.~\ref{def:robustness}. The proof of this lemma is in the appendix. 
\begin{restatable}{lemma}{lemmarobustnessuniqueness}\label{lemma:robustness_uniqueness}
Given LTS $E$, controller $C$, safety property $P_{saf}$, and environment property $P_{env}$, there is a unique $\Delta$ that satisfies the conditions in Def.~\ref{def:robustness}. 
\end{restatable}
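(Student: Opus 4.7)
The plan is to show that $\Delta$ must coincide with the set of maximal robust-and-feasible deviations, and since that set is uniquely determined by $E$, $C$, $P_{saf}$, $P_{env}$, uniqueness follows.

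First I would introduce the auxiliary set
\[
\mathcal{D} \;=\; \{\, d \subseteq \Alltrans \,:\, E_d/C \models P_{saf} \text{ and } E_d \models P_{env}\,\},
\]
i.e., the set of robust, feasible deviations. Since $Q_E$ and $Act_E$ are finite, $\Alltrans$ is finite, so $\mathcal{D}$ is finite and every chain in $(\mathcal{D}, \subseteq)$ has a maximal upper bound inside $\mathcal{D}$. Let $\Delta^\star$ denote the set of $\subseteq$-maximal elements of $\mathcal{D}$. I would then verify in one short paragraph that $\Delta^\star$ satisfies conditions (1)--(4) of Def.~\ref{def:robustness}: conditions (1) and (4) hold because $\Delta^\star \subseteq \mathcal{D}$; condition (3) is immediate from maximality; and condition (2) follows because any $d \in \mathcal{D}$ can be extended to a maximal element of $\mathcal{D}$ by finiteness. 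This establishes existence (which is not strictly requested, but makes the uniqueness argument concrete by pinning down the candidate).

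Next I would prove uniqueness by showing that \emph{any} $\Delta$ satisfying (1)--(4) equals $\Delta^\star$, via double inclusion. For $\Delta \subseteq \Delta^\star$: take $d \in \Delta$; conditions (1) and (4) give $d \in \mathcal{D}$, and I must show $d$ is maximal in $\mathcal{D}$. Suppose for contradiction that $d \subsetneq d''$ for some $d'' \in \mathcal{D}$. Applying condition (2) to $d''$ yields some $d' \in \Delta$ with $d'' \subseteq d'$, hence $d \subsetneq d'$, so $d \neq d'$ but $d \subseteq d'$, contradicting condition (3). For $\Delta^\star \subseteq \Delta$: take $d^\star \in \Delta^\star \subseteq \mathcal{D}$; by condition (2) there exists $d' \in \Delta$ with $d^\star \subseteq d'$. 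By the inclusion just proved, $d' \in \Delta^\star$, so $d'$ is itself maximal in $\mathcal{D}$. Since $d^\star$ is also maximal and $d^\star \subseteq d' \in \mathcal{D}$, we must have $d^\star = d'$, hence $d^\star \in \Delta$.

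The argument is essentially a standard antichain-of-maximal-elements characterization, so the only subtle step is the direction $\Delta \subseteq \Delta^\star$: one has to resist the temptation to invoke condition (2) on $d$ itself (which would be vacuous, as $d \subseteq d$), and instead invoke it on a hypothetical strict extension $d''$ inside $\mathcal{D}$, then use condition (3) to derive the contradiction. I expect no serious technical obstacle beyond this bookkeeping, and the proof should fit comfortably in the appendix.
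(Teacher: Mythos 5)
Your proof is correct, but it is structured differently from the paper's. The paper argues by direct contradiction between two hypothetical solutions $\Delta_1 \neq \Delta_2$: it picks $d_1 \in \Delta_1 \setminus \Delta_2$, uses condition (2) of $\Delta_2$ to obtain a strict superset $d_2 \in \Delta_2$ of $d_1$, and then shows via condition (3) that $d_2$ cannot be represented in $\Delta_1$, contradicting condition (2) of $\Delta_1$. You instead give an explicit characterization: you define $\mathcal{D}$ as the set of robust and feasible deviations, let $\Delta^\star$ be its set of $\subseteq$-maximal elements, and prove by double inclusion that any $\Delta$ satisfying (1)--(4) equals $\Delta^\star$. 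Both arguments rest on the same interplay between conditions (2) and (3) (your key move in the inclusion $\Delta \subseteq \Delta^\star$ --- applying condition (2) to a hypothetical strict extension $d''$ rather than to $d$ itself --- is essentially the same maneuver the paper performs with $d_2$), so neither is deeper than the other. What your version buys is that it simultaneously establishes \emph{existence} of $\Delta$ and identifies it concretely as the antichain of maximal elements of $\mathcal{D}$; the paper's Def.~4 refers to ``the (unique) set'' but its proof only addresses uniqueness, leaving existence implicit, so your formulation is arguably the more complete one. The paper's version is marginally more self-contained in that it needs no auxiliary set, but it yields strictly less information.
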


\begin{example}\label{example:robustness}
Back to our running example, we investigate robust deviations and $\Delta$.
For simplicity, we do not impose any environment constraint, i.e., $P_{env} = Act_E^*$.
Figure~\ref{fig:robust-envs} shows four robust deviations for our running example, where transitions in green are deviations added to the environment.
All robust deviations allow at most two transitions with action $a$, which is the maximum number allowed by the property.
In this example, $\Delta$ has three robust deviations that are represented in Figs.~\ref{fig:tol-safe2-meta}-\ref{fig:tol-safe4-meta}.
Since the robust deviation shown in Fig.~\ref{fig:tol-safe-meta} is a subset of both deviations in Fig.~\ref{fig:tol-safe2-meta} and Fig.\ref{fig:tol-safe3-meta}, it is not included in $\Delta$.

\begin{figure}[!t]
\centering  
\begin{subfigure}[b]{0.48\textwidth}
        \centering
        \includegraphics[width=.5\textwidth]{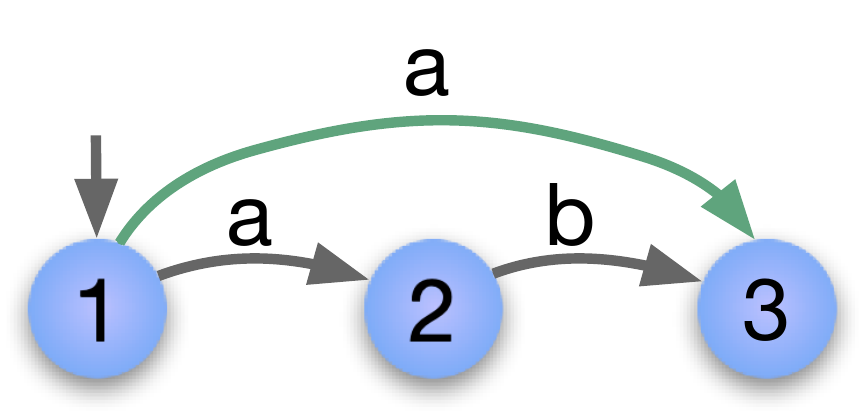}
        \caption{A robust deviated environment}
        \label{fig:tol-safe-meta}
\end{subfigure}
\enskip
\begin{subfigure}[b]{0.49\textwidth}
        \centering
        \includegraphics[width=.5\textwidth]{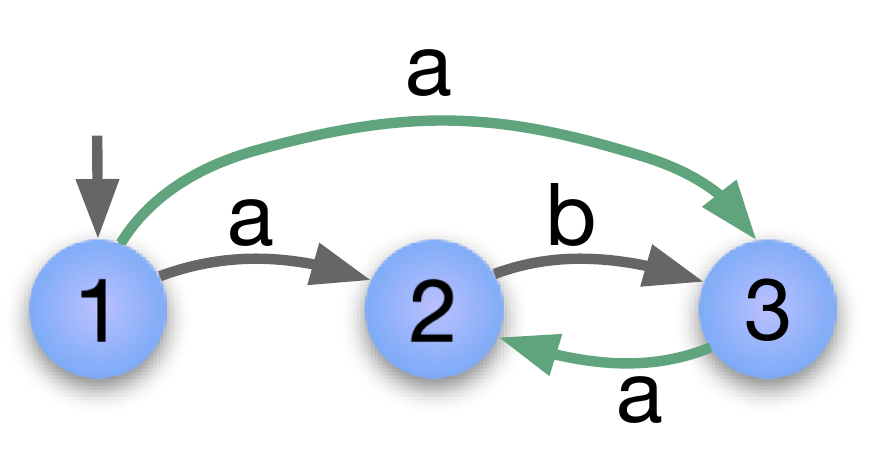}
        \caption{Maximal robust deviated environment}
        \label{fig:tol-safe2-meta}
\end{subfigure}
\begin{subfigure}[b]{0.49\textwidth}
        \centering
        \includegraphics[width=.5\textwidth]{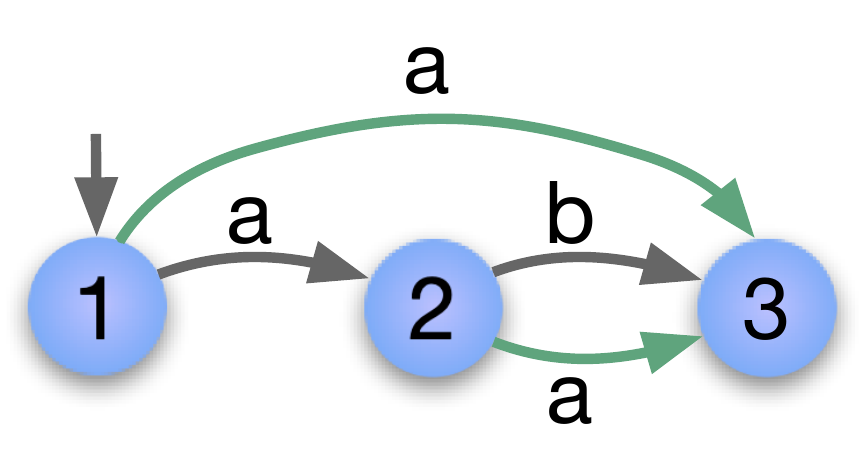}
        \caption{Maximal robust deviated environment}
        \label{fig:tol-safe3-meta}
\end{subfigure}
\begin{subfigure}[b]{0.49\textwidth}
        \centering
        \includegraphics[width=.5\textwidth]{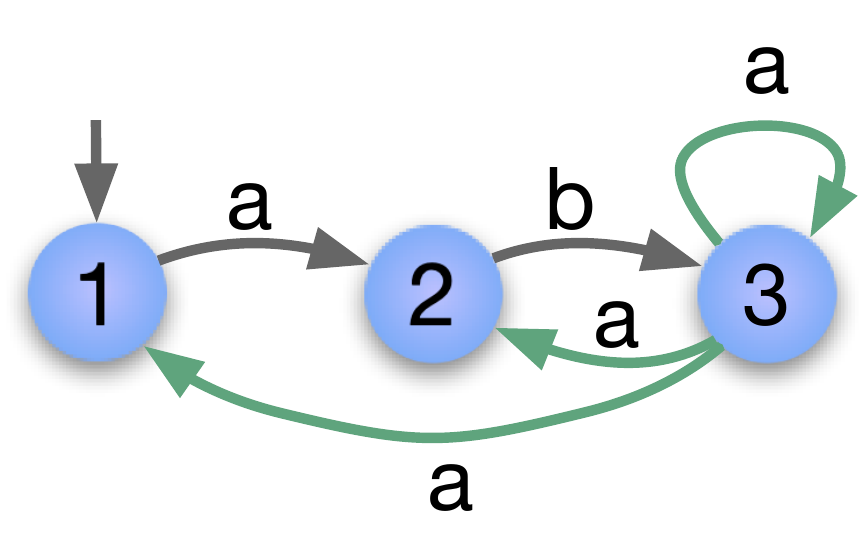}
        \caption{Maximal robust deviated environment}
        \label{fig:tol-safe4-meta}
\end{subfigure}

\caption{Robust deviated environments. Robust transitions $Q_E\times\{b\}\times Q_E$ are omitted.}
\label{fig:robust-envs}
\end{figure}
\end{example}

\subsection{Problem statement}
Although Def.~\ref{def:robustness} has formally introduced our notion of robustness, it does not show how to compute robustness.
Therefore, we investigate the problem of computing the set $\Delta$.

\begin{problem}\label{prob:comp-tol}
Given $E$, $C$, $P_{saf}$, and $P_{env}$ as in Def.~\ref{def:robustness}, compute $\Delta$.
\end{problem}

\subsection{Comparing robustness}\label{sect:compare-robustness}
Our robustness definition also allows us to compare the robustness between different controllers as well as different safety properties.

\subsubsection{Comparing controllers}
Holding the environment and safety property constant, we can compare the robustness of the controllers.

\begin{definition} \label{def:compare-controllers}
Given an environment $E$, controllers $C_1$ and $C_2$, safety property $P_{saf}$, and environment constraint $P_{env}$, controller $C_1$ is at least as robust as $C_2$ if and only if for all $d_2 \in \Delta(E,C_2,P_{saf},P_{env})$ there exists $d_1 \in \Delta(E,C_1,P_{saf},P_{env})$ such that $d_2 \subseteq d_1$.
Equality and strictly less/more robust are defined in the usual manner using $\subseteq$.
\end{definition}

\subsubsection{Comparing safety properties}
Holding the environment and controller constant, we can compare the robustness between safety properties.

\begin{definition} \label{def:compare-properties}
Given an environment $E$, controllers $C$, safety properties $P_{saf,1}$ and $P_{saf,2}$, and environment constraint $P_{env}$, controller $C$ is at least as robust with respect to $P_{saf,1}$ than with respect to $P_{saf,2}$ if and only if for all $d_2 \in \Delta(E,C,P_{saf,2},P_{env})$, there exists $ d_1 \in \Delta(E,C,P_{saf,1},P_{env})$ such that $ d_2 \subseteq d_1$.
\end{definition}



\section{Computing robustness}
\label{sect:computing-robustness}

This section presents two manners of solving Problem~\ref{prob:comp-tol}.
One is a brute-force algorithm whereas the second uses control techniques to obtain the solution.
Usually when dealing with regular safety properties, one transforms the safety property into an invariance property.
This transformation is simply obtained by composing the environment with the safety property; then, an invariance property equivalent to the safety is defined over this composed system \cite{Baier:2008}.
In this composed system, an invariance property is simply defined by a set of safe states. 
Unfortunately, computing robustness for safety properties does not directly reduce to computing robustness for invariance properties.

When transforming a safety property $P_{saf}$ to an invariance property, we compose the environment and the safety property.
Let us assume that there are no environmental constraints.
In our scenario, the invariance property $P_{inv}$ is defined based on the composed system $E||C||P_{saf}$, i.e., $P_{inv}\subseteq Q_{E||C||P_{saf}}$.
The composed system $P_{inv}$ introduces memory to the environment to differentiate when the safety property is violated or not.
This memory addition prevents a simple reduction between invariance and safety properties since robustness is defined with respect to the environment.
Robustness defines new transitions in $E$ whereas computing robustness with respect to $P_{inv}$ defines new transitions in $E||C||P_{saf}$.
For this reason, we cannot simply reduce the problem of computing $\Delta$ with respect to safety properties to the problem of computing $\Delta$ with respect to an invariance property.



\subsection{Brute-force algorithm}\label{sect:brute-force}
One way of solving Prob.~\ref{prob:comp-tol} is via a brute-force algorithm. 
Intuitively, this algorithm is broken into two parts: (i) finding the set of robust deviations that satisfy the environmental constraint, and (ii) identifying the maximal ones within this set.
In part (i), we verify $E_d||C\models P_{saf}$ and $E_d\models P_{env}$ for all deviations $d\subseteq (Q_E\times Act_E\times Q_E)\setminus R_E$, which can be solved using standard model checking techniques \cite{Baier:2008}.
Since this algorithm checks if every deviation set is robust or not, it is clear that it computes $\Delta$.

\subsection{Controlling the deviations without environmental constraints}
\label{subsect:unconstraint}

Due to the lack of scalability of the brute-force algorithm, we search for more efficient ways to compute $\Delta$.
For readability purposes, we start by describing our algorithm in detail assuming no environmental constraints, i.e., unconstrained environment $P_{env} = Act_E^*$.
In the next section, we show how to use this algorithm to completely solve Prob.~\ref{prob:comp-tol}, i.e., for a possibly constrained environment $P_{env} \subseteq Act_E^*$.


\subsubsection{Overview of the control algorithm}
At a high level, we transform the problem of computing $\Delta$ to a problem of controlling environmental transitions to avoid safety violations.
Intuitively, we control deviations to force them to be robust, i.e., we take the viewpoint that we can control transitions in $(\Alltrans) \setminus R_E$.
Different ways of controlling transitions in $(\Alltrans) \setminus R_E$ provide different robust deviations.

\begin{figure}[thpb]
\centering  
\includegraphics[width=\textwidth]{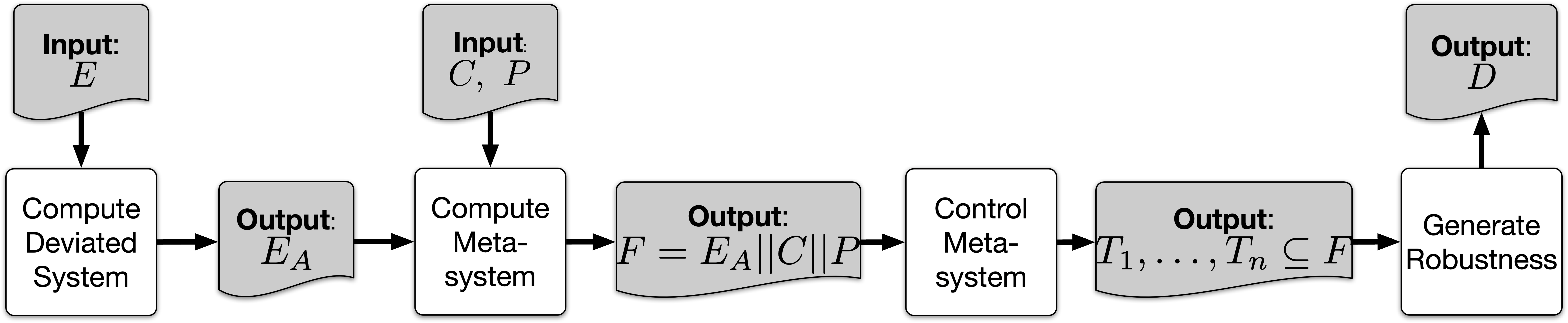}
\caption{Overview of our approach to compute robustness for the unconstrained environment. The inputs are the LTSs of environment $E$, controller $C$, and property $P_{saf}$. The set $A$ is the set of all environment transitions, $A = \Alltrans$. The LTSs $T_1,\dots, T_n\subseteq F$ represent controlled meta-systems.}
\label{fig:overview-alg}
\end{figure}

Figure~\ref{fig:overview-alg} provides an overview of our approach.
First, we define LTS $E_A$ to be the deviated system with all possible transitions, i.e., $A = \Alltrans$.
The deviated system $E_A$ is the maximally deviated environment since it encompasses every possible deviated system $E_d$ for $d\subseteq \Alltrans$.

Next, we compose the deviated environment $E_A$ with controller $C$ and property $P_{saf}$, to create a ``meta-system'' $F$.
This meta-system provides information about how the deviated environment $E_A$ under the control of $C$ can violate $P_{saf}$.
Following this composition, we pose a control problem over the meta-system to prevent any violation of $P_{saf}$.
There are multiple ways of controlling this composed system; in our approach, we obtain a finite number of controllers encoded as $T_i\subseteq F$.
These different ways of controlling the meta-system provide different robust deviations from which we can extract $\Delta$.
To make our approach concrete, we describe each step in detail using our running example, shown in Figure~\ref{fig:ex-algo-constr}.


\subsubsection{Constructing the meta-system}

The deviated environment $E_A = E_{\Alltrans}$ contains the behavior of any other deviated environment.
Therefore, we define the meta-system to be the composition of deviated environment $E_A$, controller $C$, and property $P_{saf}$, i.e., $F = E_A||C||P_{saf}$.
Figure~\ref{fig:meta-system} shows the meta-system $F$ for our running example.
Since $C$ only has one state, we omit its state from the state names in Fig.~\ref{fig:meta-system}, i.e., states in Fig.~\ref{fig:meta-system} are defined as $(q_e,q_p)\in Q_E\times Q_{P_{saf}}$ instead of $(q_e,q_c,q_p)\in Q_E\times Q_C \times Q_{P_{saf}}$.
All transitions in $F$ are labeled $a$, omitted in Fig.~\ref{fig:meta-system}, since controller $C$ only enables action $a$.
We also identify in $F$ which transitions are derived from the environment (dashed blue) and which are derived from deviations (green).
For simplicity, we define a single error state in $F$ to capture every $(q_e,q_c,err)\in Q_E\times Q_C\times Q_{P_{saf}}$.

\begin{figure}[thpb]
\centering  
\begin{subfigure}[b]{0.44\textwidth}
        \centering
        \includegraphics[width=\textwidth]{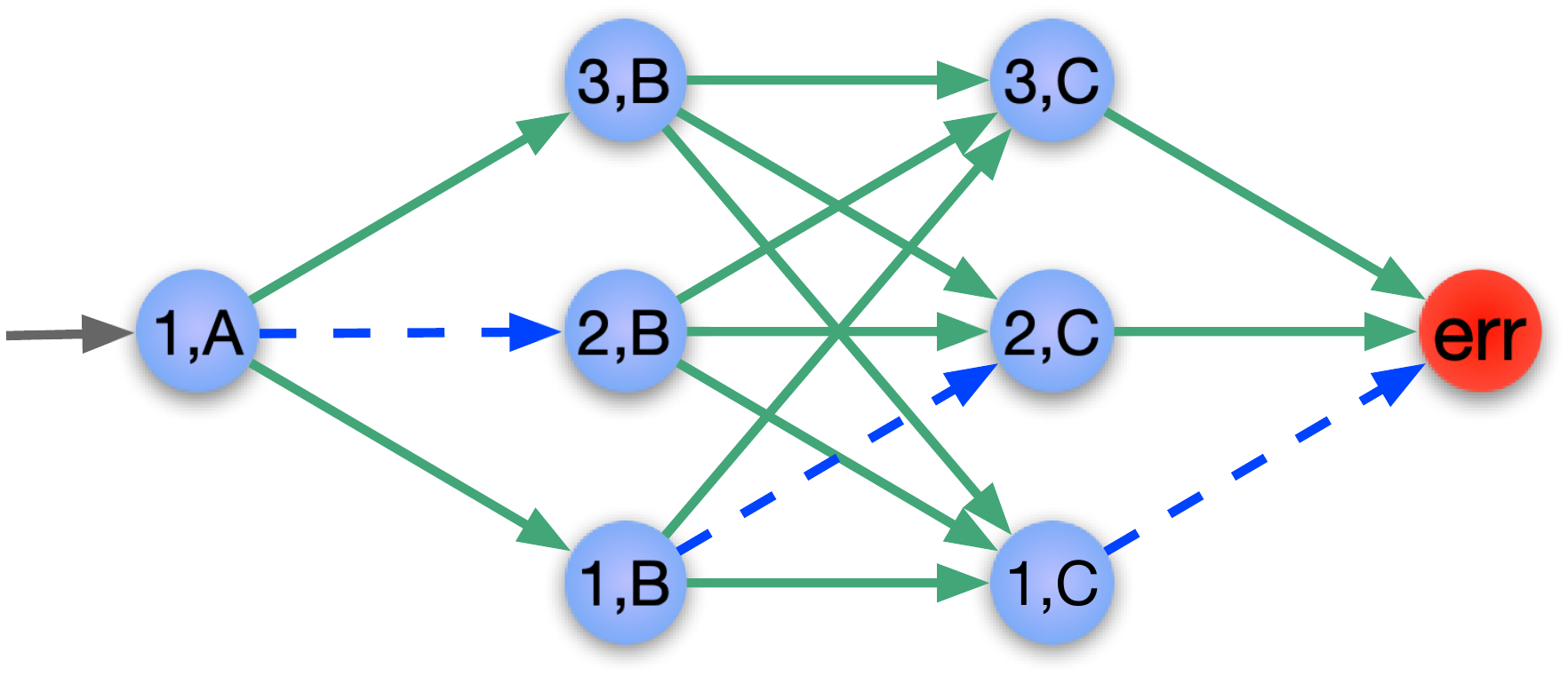}
        \caption{Meta-system $F$}
        \label{fig:meta-system}
\end{subfigure}
\enskip
\begin{subfigure}[b]{0.44\textwidth}
        \centering
        \includegraphics[width=\textwidth]{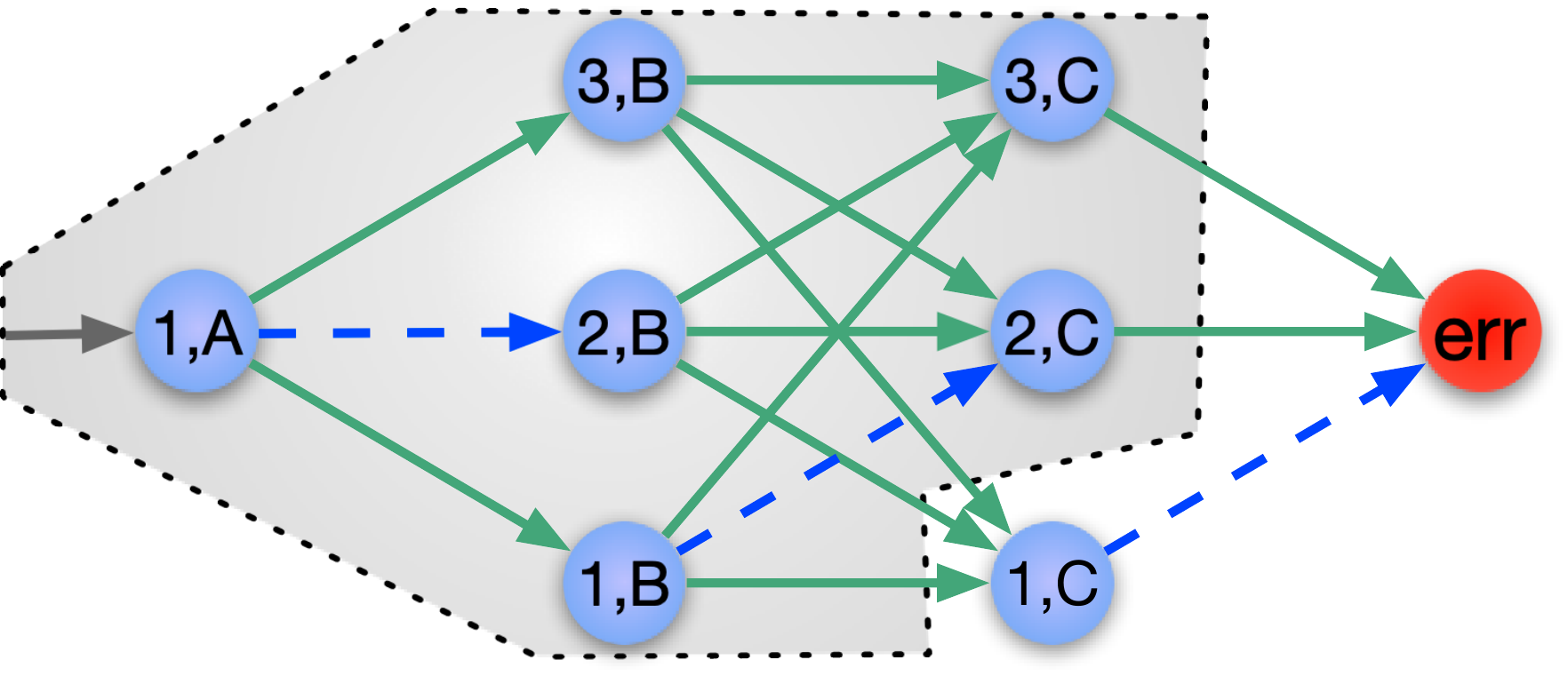}
        \caption{Meta-controller $T_1$}
        \label{fig:meta-system-safe}
\end{subfigure}
\enskip
\begin{subfigure}[b]{0.44\textwidth}
        \centering
        \includegraphics[width=\textwidth]{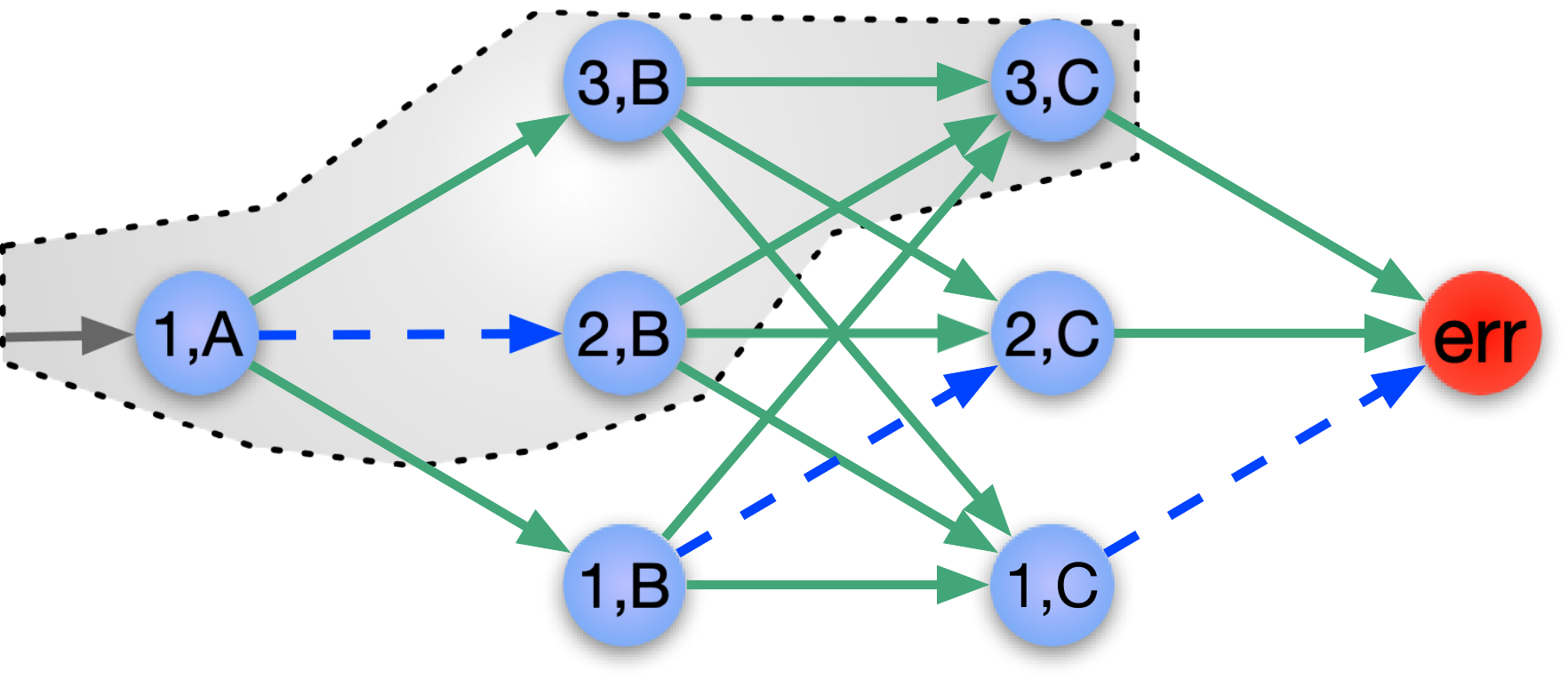}
        \caption{Meta-controller $T_2$}
        \label{fig:meta-system-safe2}
\end{subfigure}

\caption{Meta-systems.  All transitions have action $a$ since $C$ only enables action $a$ (see Fig.~\ref{fig:ctr-alg-lts}). Dashed blue transitions represent transitions that are feasible in $R_E$ while solid green transitions represent the deviated transitions in $(\Alltrans)\setminus R_E$. The shaded area in Fig.~\ref{fig:meta-system-safe} contains all safe states in the meta-system.}
\label{fig:meta-systems}
\end{figure}

\subsubsection{Controlling the meta-system:}
Once the meta-system is constructed, we pose a meta-control problem over $F$ to ensure that the meta-system avoids the error states, i.e., states $(q_e,q_c,err)\in Q_E\times Q_C\times Q_{P_{saf}}$.
These error states represent safety violations in the closed-loop system.
For instance, in Fig.~\ref{fig:meta-system}, if transition $(2,C)\rightarrow err$ occurs, then the closed-loop system violates $P_{saf}$ since more than two actions $a$ were executed.
In this meta-control problem, a meta-controller can disable transitions in $F$ that originated from deviations in $E$, i.e., transitions in $(\Alltrans)\setminus R_E$.
\begin{problem}\label{prob:meta-control}
Given meta-system $F$, synthesize a meta-controller $T\subseteq F$ such that (1) for any $(q_e,q_c,q_p)\in Q_T$ then state $q_p\neq err$; and (2) for any $\big((q_e,q_c,q_p),a,(q'_e,q'_c,q'_p)\big)\in R_F\setminus R_T$ such that $(q_e,q_c,q_p)\in Q_T$, it follows that $(q_e,a,q'_e)\notin R_E$.
\end{problem}

Problem~\ref{prob:meta-control} states that the meta-controller is a subset of the meta-system $F$.
We want to maintain the same structure as in $F$ since we need to enforce that the meta-controller does not disable any transition associated with $R_E$.
Condition (1) in Problem~\ref{prob:meta-control} ensures that property $P_{saf}$ is not violated.
On the other hand, condition~(2) guarantees that only transitions assigned to deviations are disabled.

Back to our example, the LTS $T$ described by the shaded area in Fig.~\ref{fig:meta-system-safe} demonstrates a possible meta-controller that satisfies Problem~\ref{prob:meta-control}.
Condition (1) is satisfied since the error state is not included in the shaded area.
With respect to condition~(2), only solid green transitions are disabled.
Figure~\ref{fig:meta-system-safe2} shows another meta-controller.

To solve Problem~\ref{prob:meta-control}, one can solve a safety game over $F$ using fixed-point computation \cite{Gradel:2002,McNaughton:1993}.
Due to space limitations, we demonstrate the solution to this safety game in the Appendix.




\subsubsection{Extracting robust deviations}

Each meta-controller that solves Problem~\ref{prob:meta-control} relates to a robust deviation.
Intuitively, a meta-controller disables deviations that would violate $P_{saf}$.
For instance, the meta-controller $T_1$ shown in Fig.~\ref{fig:meta-system-safe} disables transition $(3,B)\rightarrow (1,C)$, which relates to disabling transition $3\xrightarrow[]{a} 1$ in the environment.
Figure~\ref{fig:tol-safe-meta} depicts the deviated environment related to meta-controller $T_1$.
Similarly, Fig.~\ref{fig:tol-safe2-meta} shows the deviated environment associated with meta-controller $T_2$.

To extract a robust deviation from a meta-controller, we have to (1) identify the transitions that the meta-controller has disabled; and (2) project the disabled transitions to transitions ${\Alltrans}$.
Since a meta-controller is a subset of the meta-system, the disabled transitions are obtained by comparing $F$ and $T$.
Intuitively, the disabled transitions are those that escape the shaded area in Fig.~\ref{fig:meta-systems}.
\begin{equation}
Disabled := \{(q,\ a,\ q')\in R_F \mid q\in Q_T \ \land \ (q,\ a,\ q')\notin R_T\}
\end{equation}

For instance, in the case of meta-controller $T_1$, the transition $((1,B),a,(1,C))$ belongs to the $Disabled$ set.
Next, based on the disabled transitions, we project them to transitions in $\Alltrans$, i.e., transitions in the environment.
\begin{equation}\label{eq:del}
del := \{(q_e,\ a,\ q'_e)\in \Alltrans\mid ((q_e,q_c,q_p),a,(q'_e,q'_c,q'_p))\in Disabled\}
\end{equation}
Transitions in $del$ are the transitions to be deleted from $\Alltrans$ such that $(\Alltrans)\setminus del$ is a robust deviation set.
If transitions in $del$ are included in a deviation set, they can cause a violation of property $P_{saf}$.
In the case of $T_1$, the transition $(1,a,1)$ is included in $del$.
If we maintain, for instance, transition $1\xrightarrow[]{a} 1$ as part of a deviation set $d$, then the closed-loop $E_d/C$ violates the property $P_{saf}$ since the path $(1,A)\rightarrow (1,B) \rightarrow (1,C) \rightarrow err$ would be feasible in the meta-controller.

\subsubsection{Computing robustness $\Delta$}

Problem~\ref{prob:meta-control} searches for meta-controllers that guarantee the satisfaction of property $P_{saf}$.
To compute $\Delta$, we need to obtain a finite number of meta-controllers.
Algorithm~\ref{alg:tol-control} formalizes our description in Fig.~\ref{fig:overview-alg}.
It takes as input the environment $E$, the controller $C$, a deviation set $d$, and a safety property $P$.
From the algorithm overview description in Fig.~\ref{fig:ex-algo-constr}, we have that for the unconstrained environment $d = A = Q_E\times Act_E\times Q_E$ and $P = P_{saf}$.
\begin{algorithm}
\caption{COMPUTE-ROBUSTNESS}
\label{alg:tol-control}
\begin{center}
\renewcommand{\algorithmicrequire}{\textbf{Input:}}
\renewcommand{\algorithmicensure}{\textbf{Output:}}
\begin{algorithmic}[1]
\vspace*{-0.2cm}
\Require{LTSs $E$, $C$, $P$ and deviation $d$}
\Ensure{Set of deviations $D$}
\State $D \leftarrow \emptyset$ 
\State $F\leftarrow E_d||C||P$
\State $Err \leftarrow \{(q_e,q_c,q_p)\in Q_F\mid q_p = err\}$
\State $W \leftarrow Inv(Q_F\setminus Err)$
\ForAll{$S\in 2^W\setminus \{\emptyset\}$}
\State $T \leftarrow \textsc{Meta-Controller}(S,F)$
\State $del \leftarrow \{(q_e,a,q'_e)\in d\ \mid \exists ((q_e,q_c,q_p),\ a,\ (q'_e,q'_c,q'_p))\in R_F\setminus R_T \text{ s.t. } (q_e,q_c,q_p)\in Q_T\}$
\State $D \leftarrow D \cup \{d\setminus del\}$
\EndFor
\While{$\exists d_1,d_2\in\Delta$ s.t. $d_1\subseteq d_2$} 
\State $D \leftarrow D\setminus \{d_1\}$
\EndWhile
\Return $D$ 
\Procedure{Meta-Controller}{$S,F$}
\State $S\leftarrow Inv(S)$
\If{$q_{0,F}\notin S$}
\State $T \leftarrow \emptyset$
\Else
\State $Q_T \leftarrow S$, $Act_T \leftarrow Act_F$, $q_{0,T} \leftarrow q_{0,F}$ 
\State $R_T \leftarrow\{(q,a,q')\in S\times Act_T\times S\mid (q,a,q')\in R_F\}$
\EndIf
\Return $T$
\EndProcedure
\end{algorithmic}
\end{center}
\vspace*{-0.2cm}
\end{algorithm}

In Alg.~\ref{alg:tol-control}, line 4 computes the largest possible set of invariant states that avoid the error state, i.e., $Inv(Q_F\setminus Err)$ solves the safety game as shown in the Appendix.
Based on this invariant set, each iteration in the loop (lines 5-8) computes a meta-controller (line 6) and stores its respective robust deviation (line 8).
The meta-controller $T$ is also computed by using the function $Inv$.
The meta-controller solution ensures that $Q_T\subseteq S$.
Line 7 computes environmental transitions that must be deleted in order to obtain a robust deviation.
The computed robust deviations are stored in $\Delta$.
Lastly, the loop in lines 9-10 ensures that only maximal robust deviations are included in $\Delta$.

In more detail, to solve Problem~\ref{prob:meta-control}, we must guarantee that the meta-system $F$ does not reach any states in $Err := \{(q_e,q_c,q_p)\in Q_F\mid q_p = err\}$.
Formally, we compute the set $Inv(Q_F\setminus Err)$, which contains every state in $F$ that does not reach a state in $Err$ via a transition associated with $R_E$.
Based on this invariant set, we can extract any meta-controller that remains within this set.
Informally, the $\textsc{Meta-Controller}(S,F)$ in line 11 of Alg.~\ref{alg:tol-control} computes a meta-controller that remains within states in $S$.
First, this procedure computes the invariant set of $S$, i.e., $Inv(S)$ with respect to meta-system $F$ (line 12).
In this manner, a meta-controller is defined by projecting the meta-system $F$ to states and transitions in the set of state $Inv(S)$ (lines 16-17).

The following theorem shows that $\Delta$ computed via Alg.~\ref{alg:tol-control} is equal to $\Delta$ as in Def.~\ref{def:robustness} when $P_{env} = Act_E^*$, i.e., Alg.~\ref{alg:tol-control} \emph{partially} solves Problem~\ref{prob:comp-tol}.

\begin{restatable}{theorem}{theodelta}\label{theo:robustness}
Given LTS $E$, controller $C$, and property $P_{saf}$, Algorithm~\ref{alg:tol-control} outputs $\Delta$ as in Def.~\ref{def:robustness} when $P_{env} = Act_E^*$.
\end{restatable}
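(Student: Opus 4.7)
The plan is to verify that the set $D$ returned by Algorithm~\ref{alg:tol-control} satisfies the four conditions of Definition~\ref{def:robustness}. Since $P_{env} = Act_E^*$, condition~(4) is vacuous, so the work is to establish (1) soundness (every $d' \in D$ is robust), (2) completeness (every robust $d^*$ is represented by some element of $D$), and (3) uniqueness (the final $D$ is an antichain under $\subseteq$). I will rely on the fact that both $C$ and $P_{saf}$ are deterministic LTSs and that $Inv$ returns the largest subset closed under all $R_E$-synchronised transitions.

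For soundness, I show by induction on trace length that for each iteration yielding a meta-controller $T$ with $q_{0,F} \in Q_T$, every state reachable in $E_{d'}||C||P_{saf}$ with $d' = d \setminus del$ lies in $Q_T$. The base case is $q_{0,F} \in Q_T$. For the inductive step at a reachable $(q_e,q_c,q_p) \in Q_T$ transitioning under action $a$ to $(q_e',q_c',q_p')$: if $(q_e,a,q_e') \in R_E$, the $R_E$-invariance enforced by $Inv$ keeps the target in $Q_T$; if instead $(q_e,a,q_e') \in d'$, then $(q_e,a,q_e') \notin del$ forces every meta-system transition from $Q_T$ labelled $a$ with environment successor $q_e'$ into $R_T$, so in particular $(q_e',q_c',q_p') \in Q_T$. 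Since $Q_T \subseteq W \subseteq Q_F \setminus Err$, no reachable state is an error state, and hence $E_{d'}/C \models P_{saf}$.

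For completeness, given a robust $d^* \subseteq \Alltrans$, I construct a witness $S^* \in 2^W$ whose iteration produces a deviation containing $d^*$. Take $S^*$ to be the set of states reachable in $E_{d^*}||C||P_{saf}$, viewed as a subset of $Q_F$. Robustness of $d^*$ gives $S^* \cap Err = \emptyset$, and since $R_E \subseteq R_{E_{d^*}}$, $S^*$ is closed under $R_E$-synchronised transitions in $F$, so $S^* \subseteq W$ and $Inv(S^*) = S^*$. Hence $\textsc{Meta-Controller}(S^*, F)$ returns $T$ with $Q_T = S^*$. I then show $d^* \cap del = \emptyset$: for every $(q_e,a,q_e') \in d^*$ and every $(q_e,q_c,q_p) \in S^*$ at which $a$ is enabled in $F$, the composition $E_{d^*}||C||P_{saf}$ can fire the $d^*$-transition to reach $(q_e',q_c',q_p')$, which therefore lies in $S^*$; determinism of $C$ and $P_{saf}$ pins the unique $(q_c',q_p')$. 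Thus the witnessing meta-system transition lies in $R_T$, not in $R_F \setminus R_T$, preventing $(q_e,a,q_e')$ from entering $del$. Condition~(3) then follows from the explicit pruning loop at lines~9--10 that deletes every element strictly contained in another; completeness survives this pruning by transitivity of $\subseteq$.

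The main obstacle is the projection step of the completeness argument: the meta-system $F$ has nondeterministic environment successors for every $q_e' \in Q_E$, whereas $d^*$ restricts the truly reachable successors; consequently, a careless choice of $S$ would leave some $d^*$-transition outside the iteration's output. The argument hinges on exploiting determinism of $C$ and $P_{saf}$ so that each meta-system successor under a given action aligns with the unique $(q_c',q_p')$ reached by the composition, together with the robustness of $d^*$ to ensure no such successor is an error state and hence belongs to $S^*$.
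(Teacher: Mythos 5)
Your proof is correct and follows essentially the same route as the paper's: a soundness lemma showing each meta-controller iteration yields a robust deviation (the paper asserts $E_{d}||C||P_{saf}\subseteq T$ "by construction" where you give the explicit induction), and a completeness lemma witnessed by taking $S$ to be the reachable state set of $E_{d^*}||C||P_{saf}$, showing it is $Inv$-closed and that the induced $del$ misses $d^*$. The only difference is presentational: you establish $d^*\subseteq A\setminus del$ directly, while the paper proves the two inclusions $A\setminus del\subseteq d$ and $d\subseteq A\setminus del$ for maximal $d\in\Delta$; both suffice for condition~(2) of Def.~\ref{def:robustness} after the pruning loop.
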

\begin{proof}
Sketch.
In order to show that Theorem~\ref{theo:robustness} holds, we provide two intermediate lemmas in the Appendix.
The first lemma states that every meta-controller $T$ produces a robust deviation.
In this manner, we show that for every $d\in \Delta$, the deviation $d$ is robust.
The second lemma shows that for every maximal robust deviation $d\in \Delta$, there exists a meta-controller $T$ associated with deviation $d$.
Consequently, Alg.~\ref{alg:tol-control} computes every possible maximal robust deviation. 
\end{proof}

Using Alg.~\ref{alg:tol-control} to compute $\Delta$ for our running example, we obtain $\Delta$ that contains the three maximal robust deviations shown in Fig.~\ref{fig:robust-envs}.
Lastly, we provide the computational complexity of Alg.~\ref{alg:tol-control}.

\begin{restatable}{theorem}{theodelta}\label{theo:robustness-complexity}
Algorithm~\ref{alg:tol-control} outputs $\Delta$ in $O(2^{|Q_E||Q_C|(|Q_{P}|-1)})$.
\end{restatable}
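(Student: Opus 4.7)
The plan is to account for the cost of each line of Algorithm~\ref{alg:tol-control} in terms of the sizes $|Q_E|$, $|Q_C|$, and $|Q_P|$, and show that the dominant term is the exponential enumeration of subsets of the safe region $W$. First I would bound the state space of the meta-system constructed at line~2: since $F = E_d \,\|\, C \,\|\, P$, we have $|Q_F| \leq |Q_E|\cdot|Q_C|\cdot|Q_P|$ and $|R_F|$ is polynomial in $|Q_F|$ and $|Act_E|$. Lines 2 and 3 can therefore be carried out in time polynomial in $|Q_F|$.

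Next I would bound $W$ computed at line~4. By convention the property LTS contains at least one error state (collapsed to a single $err$), so $|Err| \geq |Q_E|\cdot|Q_C|$ when we identify all $(q_e,q_c,err)$ states. Hence $|W| \leq |Q_F| - |Err| \leq |Q_E|\cdot|Q_C|\cdot(|Q_P|-1)$. The fixed-point computation $Inv(\cdot)$ is the standard controllable predecessor iteration on a graph of size $|Q_F|$ and thus runs in polynomial time in $|Q_F|$; this dominates each call to \textsc{Meta-Controller} as well, since that procedure only performs one additional $Inv$ computation plus a restriction of $R_F$.

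The main cost is the loop at lines~5--8, which iterates over $2^{|W|}\setminus\{\emptyset\}$ subsets of $W$. For each subset, the work inside the loop body (one $Inv$ call, construction of $T$, computation of $del$, and an insertion into $D$) is polynomial in $|Q_F|$. Hence lines~5--8 cost $O\bigl(2^{|W|}\cdot\mathrm{poly}(|Q_F|)\bigr)$. The final pruning loop at lines~9--10 compares pairs in $D$, where $|D| \leq 2^{|W|}$, and each subset-test is polynomial, giving $O(2^{2|W|}\cdot\mathrm{poly}(|Q_F|))$; however, because each surviving $d \in \Delta$ corresponds to some maximal subset of $W$ and subset-tests only affect lower-order constants in the exponent, this remains within $O(2^{|W|})$ up to polynomial factors. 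Absorbing polynomial factors into the big-$O$ notation and using the bound $|W| \leq |Q_E|\cdot|Q_C|\cdot(|Q_P|-1)$ yields the stated complexity $O\bigl(2^{|Q_E||Q_C|(|Q_P|-1)}\bigr)$.

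The only subtle point, and the step I would be most careful about, is justifying that the pruning loop does not blow up the exponent: the naive pairwise comparison gives $2^{2|W|}$, so one either has to argue that polynomial factors in $|Q_F|$ dominate the additional factor, or refine the pruning (e.g., by processing candidates in order of size) so the cost stays $2^{|W|}\cdot\mathrm{poly}$. Either way, the dominant term is the enumeration of subsets of $W$, and the claimed bound follows.
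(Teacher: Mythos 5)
Your proposal is correct and follows essentially the same route as the paper, whose entire proof is the one-liner ``It follows from the size of $2^W$'': the dominant cost is the enumeration over $2^W\setminus\{\emptyset\}$ at line~5, and $W\subseteq Q_F\setminus Err$ gives $|W|\le |Q_E|\,|Q_C|\,(|Q_P|-1)$ since the error states $(q_e,q_c,err)$ are excluded. Your additional care about the pruning loop (noting that naive pairwise comparison over $D$ costs $2^{2|W|}$ subset tests and must be argued away or refined) is a legitimate detail the paper silently glosses over, but it does not change the approach or the stated bound.
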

\begin{proof}
It follows from the size of $2^W$.
\end{proof}
Although Alg.~\ref{alg:tol-control} has exponential complexity, we empirically show in Section~\ref{sect:experiments} that it scales better than the brute-force algorithm.


\subsubsection{Heuristics to exploit the structure of $F$}
In Alg.~\ref{alg:tol-control}, we compute robust deviations for every possible subset of the largest invariant state set, c.f., line 5.
To improve the efficiency of Alg.~\ref{alg:tol-control}, we provide a sound and complete heuristic that identifies and skips redundant subsets of $2^W \setminus \emptyset$.
The heuristic is based on the observation that sets of states that are not directly connected in $F$ correspond to redundant deletion sets from $Q_E \times Act_E \times Q_E$.
As such, the heuristic exploits the structure of $F$ by performing a depth-first search over its state space, hence skipping disconnected groups of states.
For instance, the heuristic will skip the subset $\{(1,A),(3,C)\}$ because $(1,A)$ and $(3,C)$ are not connected in $F$.
This subset is redundant because its deletion set $del = \{((1,A),(1,B)),((1,A),(2,B)),((1,A),(3,B))\}$ is identical to the deletion set for the subset $\{(1,A)\}$ which is connected.
In the worst-case scenario, our heuristic computes the power set of $W$, i.e., exactly as in line 5.


\subsection{Controlling the deviations with environmental constraints}
When introducing environmental constraints, we must eliminate the robust deviations that violate these constraints as described in Def.~\ref{def:robustness}.
One might think that $P_{env}$ and $P_{saf}$ could be combined as a single safety property for which we then compute $\Delta$.
However, this approach does not work since $P_{env}$ must be enforced only by the environment whereas $P_{saf}$ is a property of the closed-loop system.
Another approach is to verify if $P_{env}$ is satisfied for each deviation obtained in the for-loop (lines 5-8) in Alg.~\ref{alg:tol-control}.
Although this approach is feasible, in practice, we want to reduce the number of deviations, using $P_{env}$, before we compute the robust deviations.
For this reason, we describe a sequential algorithm shown in Fig.~\ref{fig:overview-alg-env}.
In this algorithm, Alg.~\ref{alg:tol-control} is used multiple times in this constrained scenario instead of a single time as in the unconstrained scenario (Sect.~\ref{subsect:unconstraint}).
 
\begin{figure}[!t]
\centering  
\includegraphics[width=0.9\textwidth]{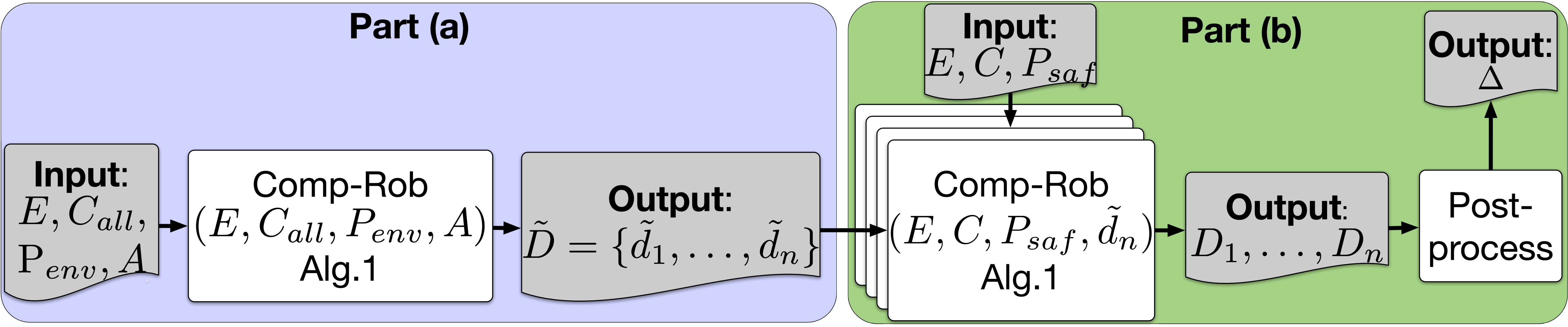}
\caption{Overview of our approach to compute robustness for constrained environments.}
\label{fig:overview-alg-env}
\vspace*{-0.7cm}
\end{figure}

The algorithm to compute robustness for constrained environments can be broken into two parts: (a) computing all maximal environments $\tilde{d}_i$ that satisfy $P_{env}$; and (b) computing robust deviations for each deviated environment $E_{\tilde{d}_i}$ found in part (a).
Computing the maximal environments that satisfy $P_{env}$ reduces to computing maximal deviations of $E$ with respect to a controller that allows every environment action, $C_{all}$.
Formally, the behavior of $C_{all}$ does not restrain $E$, $beh(C_{all}) = Act_E^*$; and it can be described by a one-state LTS.
Therefore, the output of part (a) is the set of maximal deviations $\tilde{d}_i$ with respect to $E$, $C_{all}$, and $P_{env}$, denoted as maximal environment deviations.
Each maximal deviated environment $E_{\tilde{d}_i}$ satisfy the $P_{env}$.

Once we have obtained all maximal environment deviations that satisfy $P_{env}$, we focus on finding the maximal robust deviations with respect to $C$ and $P_{saf}$.
In other words, we run Alg.~\ref{alg:tol-control} for each maximal deviated environment $E_{\tilde{d}_i}$ together with $C$ and $P_{saf}$.
Since $d$ is a subset of $\tilde{d}_i$, we have that the perturbed system $E_d$ satisfies $P_{env}$.

Each maximal deviated environment $E_{\tilde{d}_i}$ generates a set of maximal robust deviations $D_i$ with respect to $C$ and $P_{saf}$.
The final step is combining these maximal robust deviations with respect to each $\tilde{d}_i$.
Since they are maximal with respect to $\tilde{d}_i$, there could be deviations that are not maximal as defined by Def.~\ref{def:robustness}.
The post-processing step combines the deviations and eliminates any non-maximal deviations; and it outputs $\Delta$ as in Def.~\ref{def:robustness}.
The correctness of this algorithm follows from Theorem~\ref{theo:robustness}.

\section{Case studies} \label{sect:experiments}

\subsection{Implementation}\label{sect:impl}

We have implemented a prototype tool for computing robustness \cite{tr-rob-github}.
The tool accepts a model of an environment, a controller, and a safety property--as well as an optional list of environmental constraints--and outputs $\Delta$.
The tool has support for comparing the robustness of two controllers as well as the robustness of a controller with respect to two separate safety properties.
Currently, the environment, controller, safety property, and environmental constraints must be encoded in Finite State Process (FSP) notation \cite{Magee2000} but this is not a fundamental limitation.

We wrote the tool in the Kotlin programming language. 
Our tool includes an implementation of the brute-force algorithm from Sect.~\ref{sect:brute-force}, as well as an implementation of Alg.~\ref{alg:tol-control} and Alg.~\ref{alg:tol-control} with heuristics.
In the following case studies, we leverage the tool to calculate and compare the robustness of several systems.
We summarize our performance results for each case study in Sect.~\ref{sect:perf-results}.

\subsection{Therac-25} \label{sect:therac}

\subsubsection{Background}

In Sect.~\ref{sect:motivating}, we introduced the Therac-25 radiation therapy machine.
In this section, we present a case study in which we compare the robustness of the Therac-25 to that of its predecessor, the Therac-20.
We  begin by showing that the Therac-20 is strictly more robust than the Therac-25.
We then use this information to identify and fix a critical safety bug in the Therac-25 model.

\subsubsection{Therac-20}

The Therac-20 is a radiation therapy machine that was designed before the Therac-25.
Unlike the Therac-25, the Therac-20 was not known for causing accidents that led to injuries and death.
A key difference between the two machines is that the Therac-20 includes hardware \textit{interlocks} in its beam component (Fig.~\ref{fig:therac-beam-interlock}), while the Therac-25 does not (Fig.~\ref{fig:therac-beam}).
The purpose of the hardware interlocks is to provide a layer of security at the hardware level for upholding $P_{xflat}$.
In our model, the interlocks work by ensuring that the flattener is completely rotated into place before allowing an operator to fire an X-ray beam.
Unfortunately, hardware interlocks were considered expensive so they were omitted from the design of the later Therac-25 model.
In the following section, we compare the robustness between the two Therac machines with respect to the normative environment $E$ and the key safety property $P_{xflat}$.

\begin{figure}[!t]
\begin{subfigure}[b]{0.48\textwidth}
    \centering
    \includegraphics[height=100px]{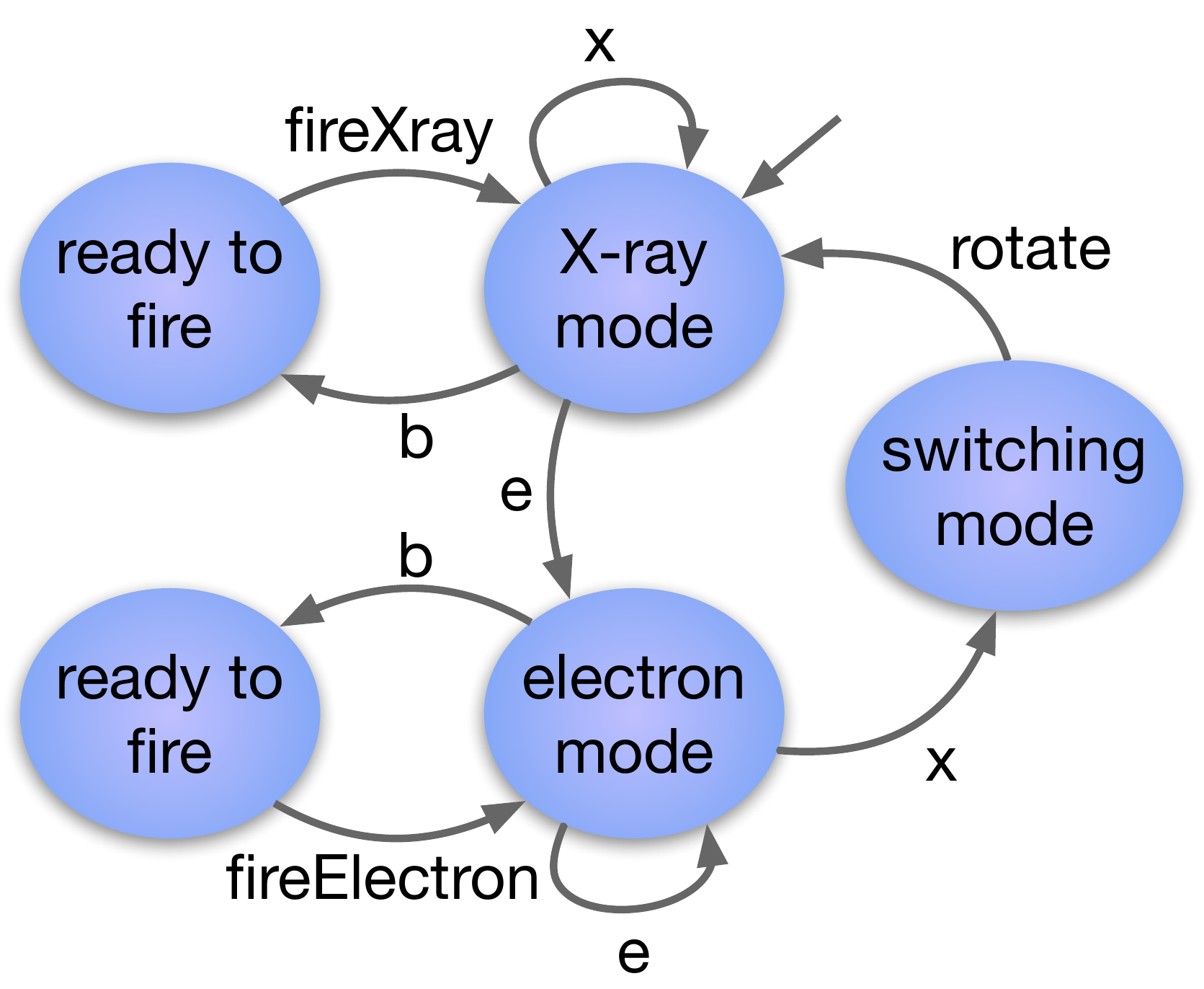}
    \caption{The beam $C'_{beam}$ \textit{with} hardware interlocks used in the Therac-20.}
    \label{fig:therac-beam-interlock}
\end{subfigure}
\enskip
\begin{subfigure}[b]{0.48\textwidth}
    \centering
    \includegraphics[height=100px]{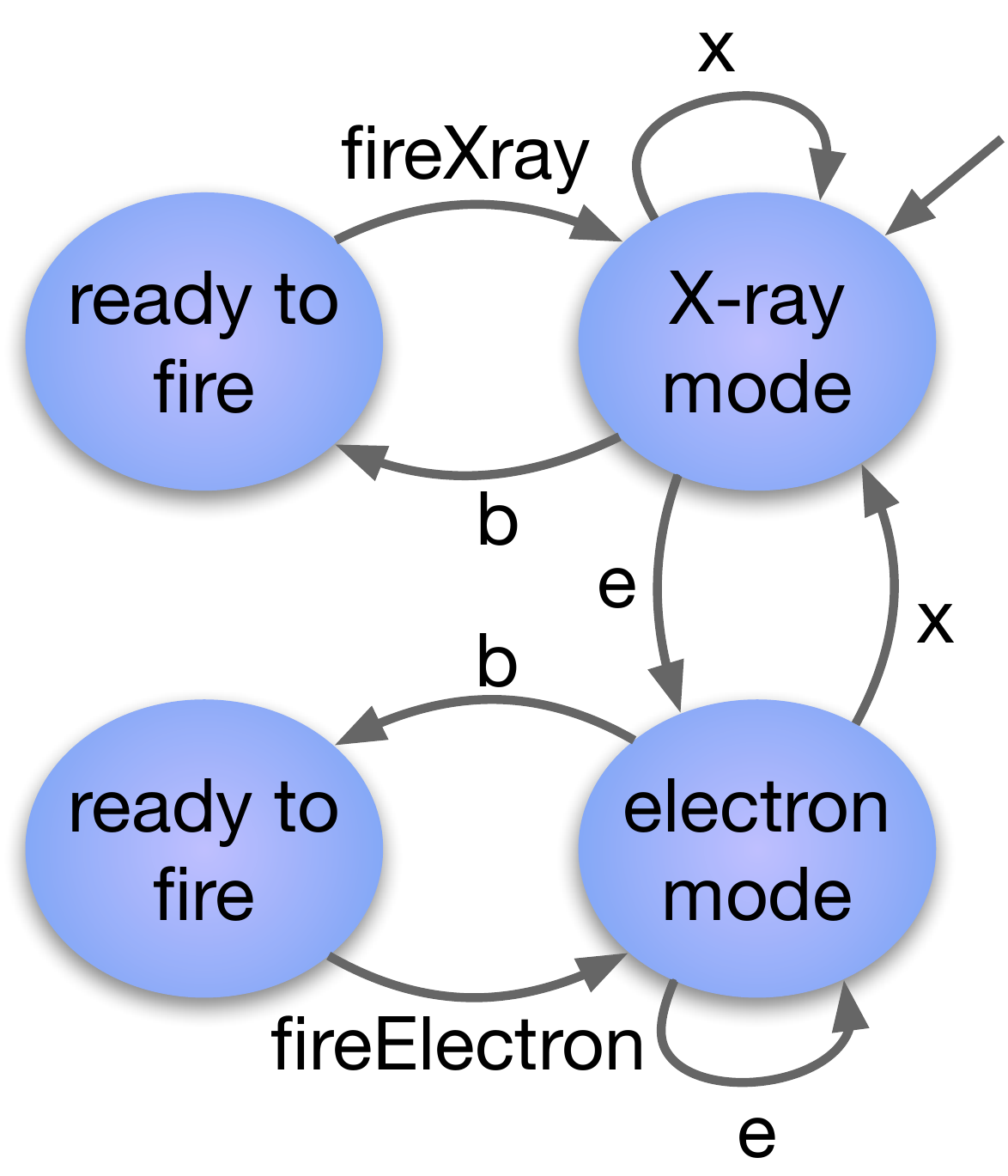}
    \caption{The beam $C_{beam}$ \textit{without} hardware interlocks used in the Therac-25.}
    \label{fig:therac-beam}
\end{subfigure}

\caption{The beam components of the two Therac machines.
The hardware interlocks cause $C'_{beam}$ to have a fifth state ``switching mode'' that will only switch to X-ray mode after the flattener rotates into place.}
\vspace{-20px}
\end{figure}

\subsubsection{Comparing controllers}

Using standard model checking techniques \cite{Baier:2008}, we can confirm that both the Therac-20 and the Therac-25 are safe with respect to $E$ and $P_{xflat}$.
Historically, however, the Therac-20 is known to be safer than the Therac-25.
Therefore, we improve our safety analysis by also comparing the robustness between the two machines with respect to $E$, $P_{xlfat}$, and an environmental constraint $P_{env}$.
$P_{env}$, shown in Fig.~\ref{fig:p-env-therac} in the Appendix, restricts the environment to firing the beam at most once.

Our tool reports that the Therac-20 is strictly more robust than the Thearc-25.
To understand this result, we can examine the difference between the robustness for each machine.
We show this difference visually by presenting one maximal robust deviation from each machine in Fig.~\ref{fig:therac-tol}.
This figure shows that the Therac-20 is robust against the scenario in which the operator
1) types ``e'' to select electron beam mode,
2) optionally types ``enter'',
3) presses the ``up'' arrow key, and finally
4) types ``x'' to switch the beam into X-ray mode.
The Therac-25, however, is not robust against this scenario.
We see this in Fig~\ref{fig:therac-tol} because the series of actions must pass through at least one green arrow, where a green arrow indicates a transition that the Therac-25 is not robust against.
In fact, the Therac-25 does not have \textit{any} maximal robust deviations that allow this scenario.

The Therac-25's lack of robustness to the scenario above represents a race condition that occurs after the operator switches into X-ray mode from electron mode.
In this scenario, if the operator types ``enter'' and fires the X-ray beam before the flattener rotates into place, the beam will fire an unflattened X-ray at the patient.
This critical bug was responsible for real-world radiation overdoses, several of which resulted in death \cite{Leveson:1993}.

\subsubsection{Fixing the software bug}

In the previous section, we identified a critical software bug in the Therac-25.
Our goal in the current section is to fix this bug entirely in the terminal software, thus avoiding an expensive hardware solution.

In Fig.~\ref{fig:therac-beam-interlock}, we see that the hardware interlocks prevent a race condition by blocking the operator from typing a ``b'' until the flattener is rotated into place.
Thus we can fix the race condition in software by altering the terminal to block the operator from typing a ``b'' until the flattener is rotated into place. 
We implement this fix by redesigning the terminal to block all key strokes from the instant it issues a ``beam ready'' message until the turntable rotates into place, as shown in Fig.~\ref{fig:therac-fix}.
Finally, we use our tool to evaluate the robustness of the fix.
The tool reports that the fixed Therac-25 design is strictly more robust than the original, and equally robust to the Therac-20.

\begin{figure}[!t]
\begin{minipage}[c]{0.48\linewidth}
    \centering
    \includegraphics[height=75px]{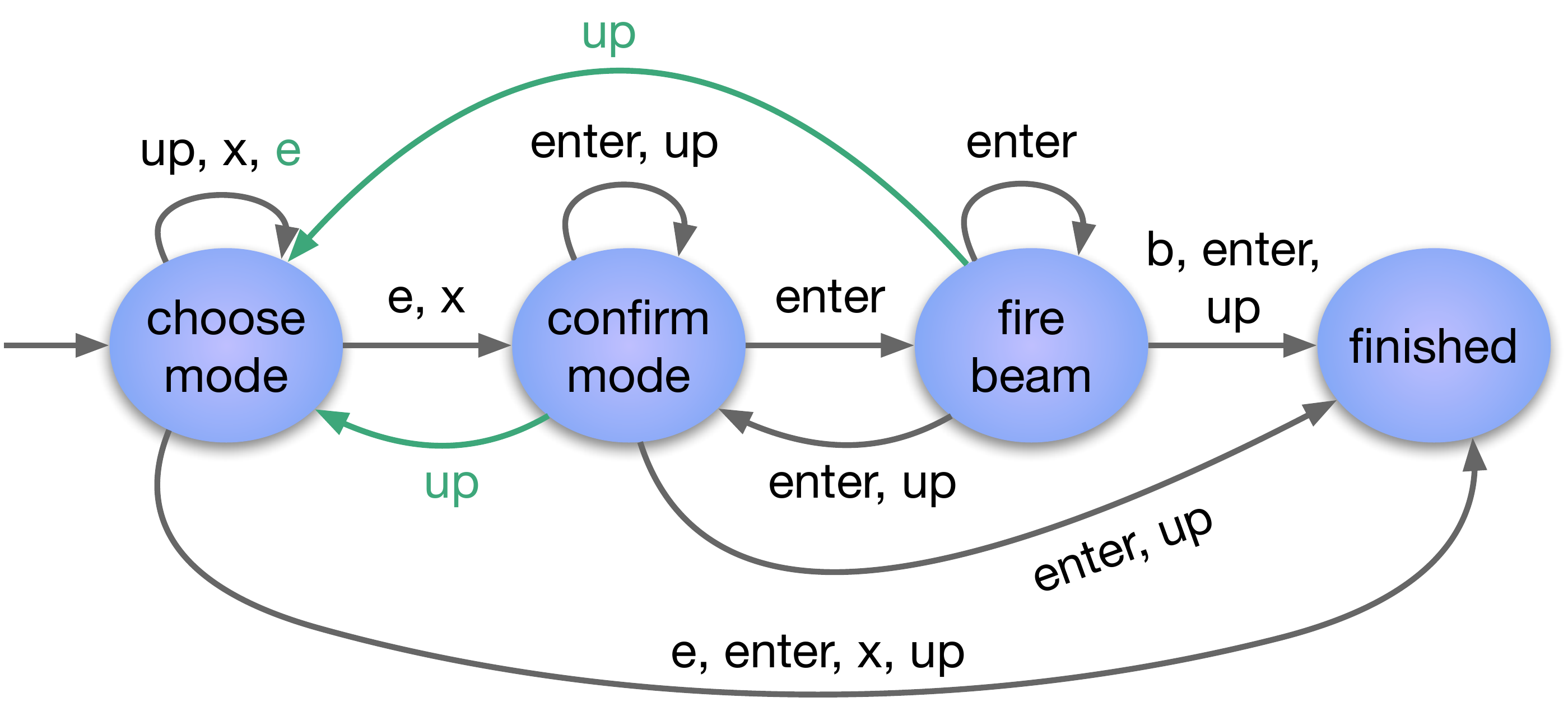}
    \caption{
        Visual robustness comparison between the two Therac machines.
        Both machines are robust against gray transitions, but only the Therac-20 is robust against green transitions.
    }
    \label{fig:therac-tol}
\end{minipage}\hfill
\begin{minipage}[c]{0.48\linewidth}
    \includegraphics[height=80px]{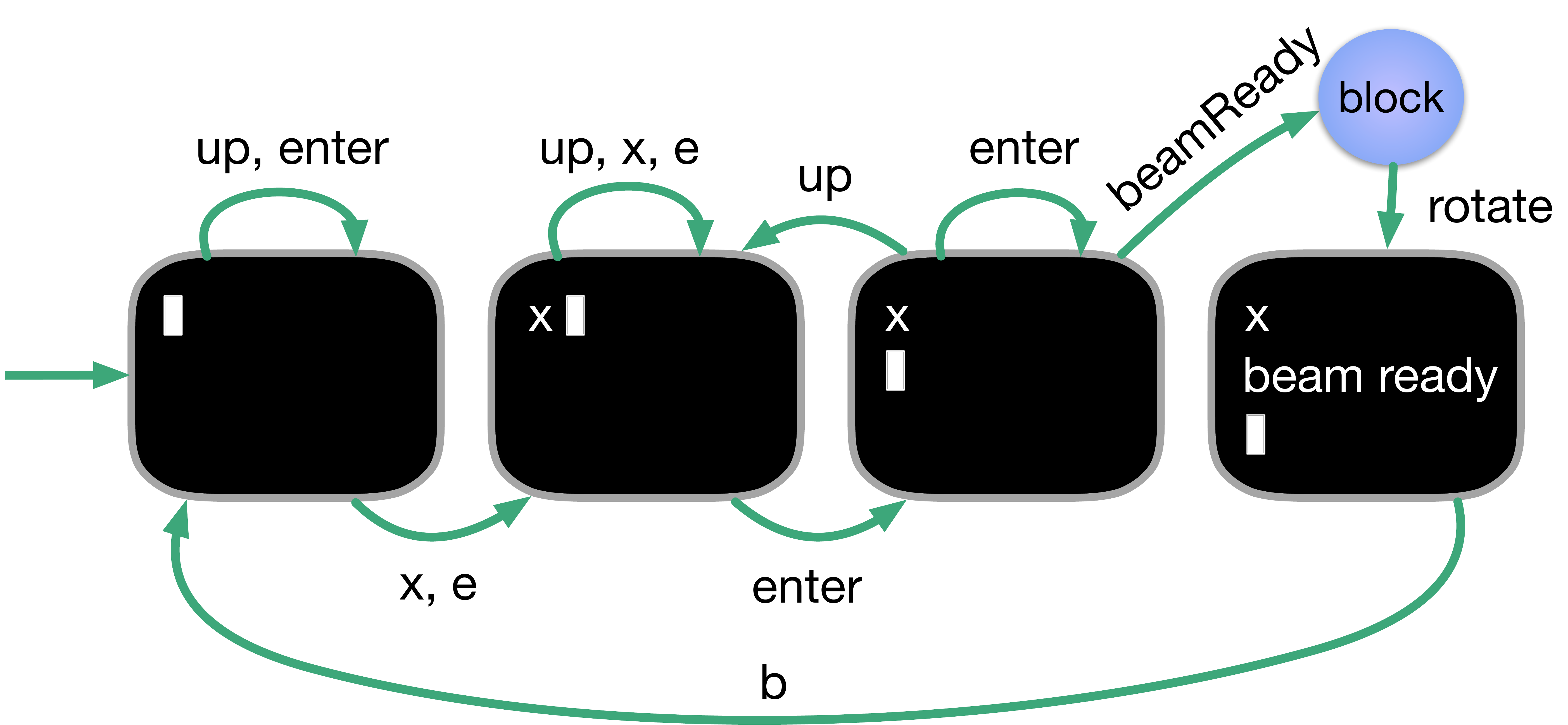}
    \caption{Software fix that eliminates the race condition in the Therac-25.
    }
    \label{fig:therac-fix}
\end{minipage}
\vspace{-20px}
\end{figure}

\subsection{Voting} \label{sect:voting}

\subsubsection{Background}

In this section, we consider a case study of an electronic voting machine, introduced in \cite{Tun2020}.
In this case study, we model the voting machine, a voter, and a corrupt election official who attempts to ``flip'' the voter's choice.
We define the voting machine as the composition of a voting booth and a user interface, shown in the Appendix in Figs. \ref{fig:voting-booth} and \ref{fig:voting-machine-screen} respectively.

In the normative environment--shown in Fig.~\ref{fig:voting-env}--the voter enters the booth, enters their password, selects a candidate, clicks the vote button, and finally confirms the choice.
Unfortunately, some voters may inadvertently skip the confirmation step and leave the booth early.
This deviation from the normative behavior presents an opportunity for the election official to ``flip'' the intended vote:
after the voter leaves the booth, the corrupt official can enter the booth, press ``back'' and change the vote to their liking. This scenario represents an actual election fraud that took place in the US~\cite{fbi_2010}.

\begin{figure}[!t]
\begin{subfigure}[b]{0.45\textwidth}
    \centering
    \includegraphics[height=60px]{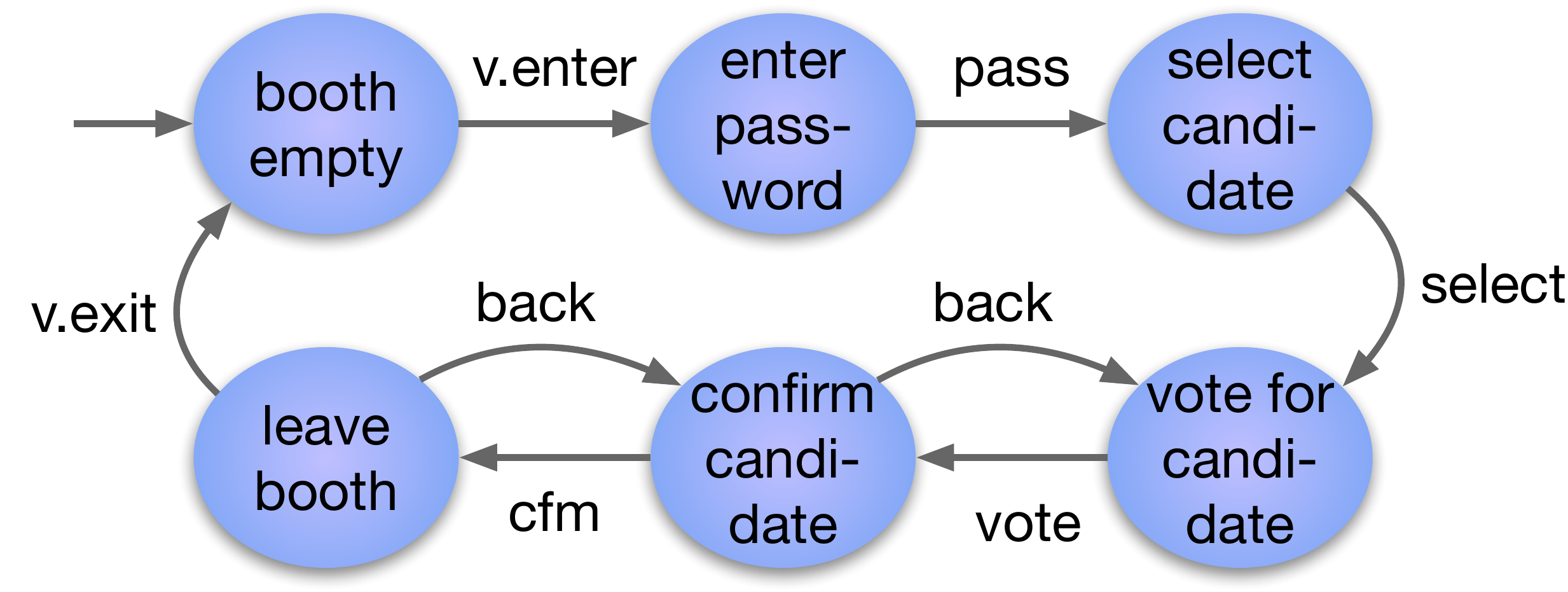}
    \caption{Normative environment for the voting machine.}
    \label{fig:voting-env}
\end{subfigure}
\enskip
\begin{subfigure}[b]{0.51\textwidth}
    \centering
    \includegraphics[height=90px]{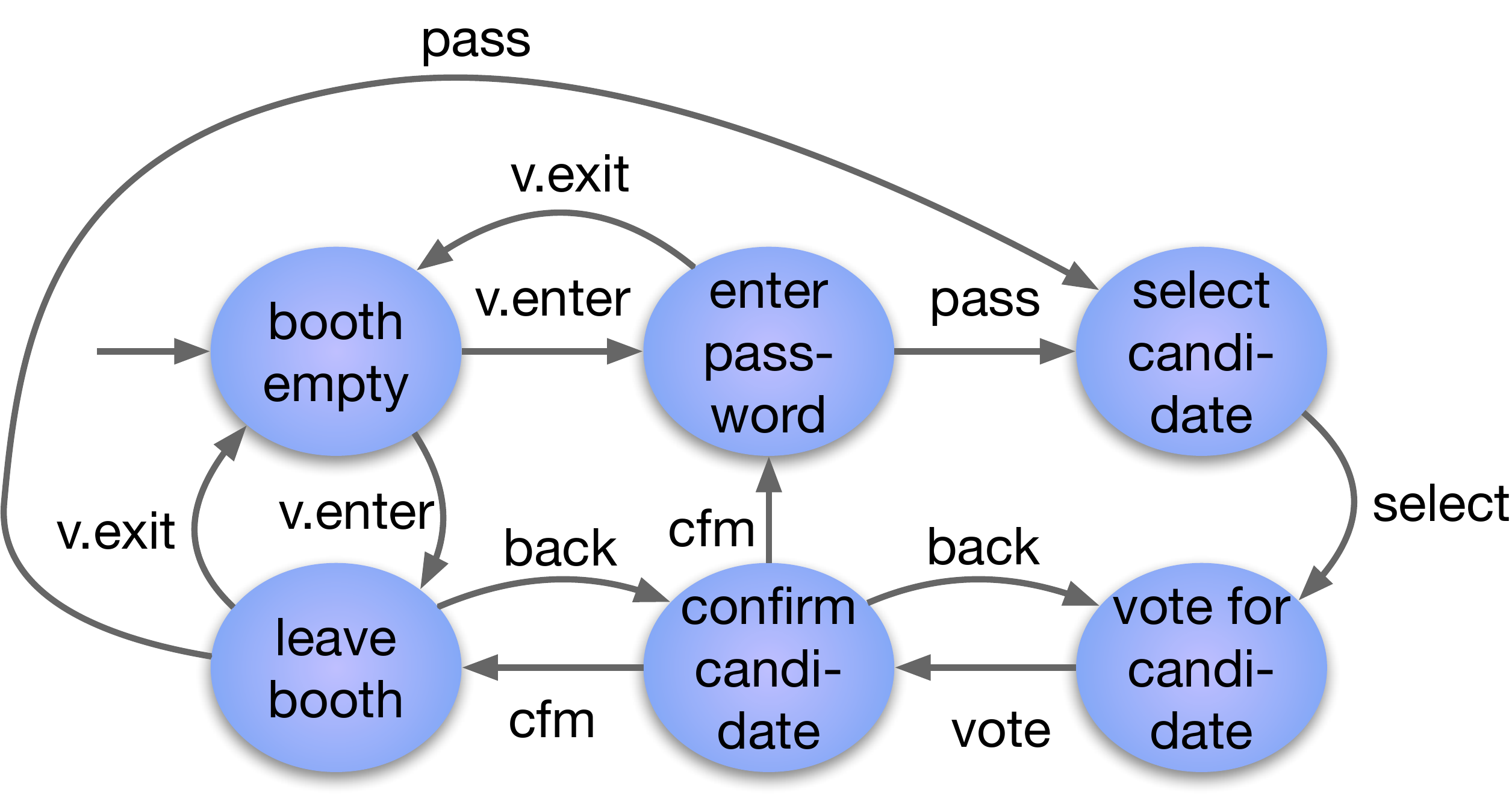}
    \caption{The voting machine's robustness is identical with respect to $P_{all}$ and $P_{cfm}$.}
    \label{fig:voting-tolerance}
\end{subfigure}

\caption{
Models for the voting machine example.
In the figures above, the prefix ``v'' represents actions by the voter.}
\vspace{-20px}
\end{figure}

\subsubsection{Comparing properties}

In this case study, we will consider two safety properties, $P_{all}$ and $P_{cfm}$, both of which imply the absence of vote flipping.
$P_{all}$ requires that the election official cannot at any point select, vote, or confirm a candidate.
$P_{cfm}$ is weaker, only requiring that the election official cannot at any point confirm a candidate selection.

Using our tool for comparison, we see that the voting machine is equally robust with respect to each property.
However, this result is surprising because $P_{cfm}$ is weaker than $P_{all}$.
To understand this result, we examine Fig.~\ref{fig:voting-tolerance} where we present the sole maximal robust deviation for each property.
In this figure, it is clear that the voting machine is not robust against any deviation in which the voter enters their password and then exits the booth without confirming their vote.
The key insight is that, when an election official has the ability to confirm, it \textit{implies} that the official can also select and vote.
Therefore, we desire a voting machine without this implication because it will reduce the number of points of failure.
For example, we could redesign the voting machine to require a password as part of the confirmation step.
In lieu of this insight, a designer could choose to specify a margin of safety into the machine's specification by requiring that it is strictly more robust against $P_{cfm}$ than $P_{all}$.

\subsection{Oyster}

\subsubsection{Background}

The Oyster example was introduced in \cite{Sempreboni:2020}, in which the authors modeled the Oyster card that is used the public transportation system in the United Kingdom.
In our model, the controller consists of an \textit{entry gate} and an \textit{exit gate}, where the card holder taps the Oyster card at the start and end of their journey respectively.
The environment models the actions of a card holder; in the normative environment, a card holder chooses to tap with either their Oyster card or a credit card, and taps in and out with the chosen card.
The key safety property is avoiding an \textit{incomplete journey}, in which a card holder taps in with one card and taps out with a different card.

\subsubsection{Calculating Robustness}

An incomplete journey is avoided under the normative environment.
We calculate the robustness of the system under the two environmental constraints
1) Oyster cards and credit cards give the correct information to the gates and
2) the gates operate correctly and calculate the correct fare when a card is tapped in and out.
Unfortunately, the system is not robust to \textit{any} deviations.

\subsection{PCA Pump}

\subsubsection{Background}

In this section, we model a patient-controlled analgesia (PCA) pump, originally introduced in \cite{Bolton:2011}.
A PCA pump is a medical device that dispenses pain medicine to a patient, offering them partial control over the dose rate.
A nurse uses the device interface to program the volume per dosage, as well as a minimum and maximum dose rate to protect the patient from an overdose.
The pump includes batteries to power the device in case it is unplugged (e.g., by mistake by the nurse or patient), yet the power may fail if the device runs out of battery.
In this case, the device cannot monitor the dosage amount or frequency, which may cause an overdose.
Therefore, we define the key safety property $P_{pfail}$ which requires the PCA pump to abstain from administering medicine after a power failure.

In the normative environment, the nurse operates the pump using the following three step workflow:
1) plug in the pump and turn it on,
2) program the desired dosage parameters into the pump and administer the treatment, and
3) turn off the device and unplug it.
The nurse begins with step (1) and ends with step (3), but may omit or repeat step (2) as many times as needed.
We show the normative environment in Fig.~\ref{fig:pump-env} in the Appendix.
Crucially, the pump is safe with respect to this environment and $P_{pfail}$ because the workflow assumes that the pump is never unplugged in step (2).

\subsubsection{Calculating Robustness}

We use our tool to calculate the robustness of the pump with respect to the normative environment, $P_{pfail}$, and an environmental constraint $P_{env}$.
In this case study, $P_{env}$ restricts the environment to actions that are allowed by the pump's interface.
The sole maximal robust deviation is shown in Fig.~\ref{fig:pump-tol} in the Appendix.
The tool reports that the pump is robust against four actions, three of which allow the operator to change settings before administering the treatment, and the fourth allows the operator to turn off the device prematurely after programming the dosage parameters.
Unfortunately, the pump is not robust against any deviations in which it is unexpectedly unplugged.
This poses a key weakness in the pump that the designers may wish to improve upon.


\subsection{Results and Discussion} \label{sect:perf-results}
We have run our tool on the examples and case studies above, and we present our results in Table \ref{table:results}.
All tests were run on a Mac Book Pro with an M1 Pro chip and 32GB of RAM.
In the table, $|Act|$ is the union of $Act_E$, $Act_C$, $Act_{P_{saf}}$ and $Act_{P_{env}}$,
$|d_{max}|$ is the size of the largest deviation in $\Delta$,
and $|W_{P_{env}}|$ is the size of the winning set for each maximal deviation $\tilde{d}_i$ (separated by a comma); NA indicates the absence of an environmental constraint.
Furthermore, ``Wall Heur'' denotes the wall time for running Alg.~\ref{alg:tol-control} with the heuristic, while ``Wall Plain'' denotes the wall time for running Alg.~\ref{alg:tol-control}, and ``TO'' indicates a time-out after five minutes.

Our results demonstrate that calculating robustness is tractable across several different case studies.
In particular, our tool's performance on the larger PCA pump case study shows promising results in terms of scalability.
Furthermore, we have shown that $\Delta$ is useful as a means for both analysis and comparison of controllers.
For example, in the Therac-25 case study, robustness provided a richer analysis than classic verification that helped us discover--and ultimately fix--a critical race condition.
Finally, we have also demonstrated in the voting machine case study that robustness provides a means for comparing two properties with respect to a controller and an environment.



\begin{table}
    \centering
    \begin{tabular}{|l|c|c|c|c|c|c|c|c|c|c|}
        \hline
        \textbf{Example} & $|Act|$ & $|Q_E|$ & $|Q_C|$ & $|Q_P|$ & $|W|$ & $|W_{P_{env}}|$ & $|\Delta|$ & $|d_{max}|$ & Wall Heur & Wall Plain\\
        \hline
        Running Example & 2 & 4 & 2 & 4 & 6 & NA & 3 & 13 & 0.433 sec & 0.431 sec\\
        \hline
        Therac-25 w/bug & 9 & 5 & 21 & 5 & 62 & 28,30,31,37 & 4 & 21 & 4.921 sec & TO\\
        \hline
        Therac-25 w/fix & 9 & 5 & 19 & 5 & 72 & 18,20,23,25 & 4 & 26 & 0.852 sec & TO\\
        \hline
        Therac-20 & 9 & 5 & 11 & 5 & 40 & 17,19,21,23 & 4 & 26 & 0.626 sec & TO\\
        \hline
        Voting wrt. $P_{cfm}$ & 9 & 7 & 13 & 3 & 66 & 7 & 1 & 12 & 0.469 sec & TO\\
        \hline
        Voting wrt. $P_{all}$ & 9 & 7 & 13 & 3 & 66 & 7 & 1 & 12 & 0.426 sec & TO\\
        \hline
        Oyster & 8 & 4 & 17 & 2 & 15 & 8 & 1 & 4 & 0.472 sec & TO\\
        \hline
        PCA Pump & 21 & 11 & 105 & 4 & 1396 & 34 & 1 & 15 & 1.922 sec & TO\\
        \hline
    \end{tabular}
    \vspace{5px}
    \caption{Summary of results from running our tool.}
    \label{table:results}
    \vspace*{-1cm}
\end{table}

\section{Related work} \label{sect:related}
Quantitative robustness notions for discrete transition systems have been investigated in several works \cite{Bloem:2014,Bloem:2009,Chaudhuri:2011,Henzinger:2014,Majumdar:2011,Neider:2020,Roopsha:2013,Tabuada:2012}.
We capture robustness qualitatively, which avoids the need for external cost functions over the discrete transition systems.
The problem of synthesizing robust controllers against deviated environments given by a designer is investigated in \cite{Topcu:2012}.
Since \cite{Topcu:2012} focuses on synthesizing robust controllers, their framework does not address the analysis of robustness.
Moreover, robust controllers are measured via a rank function (quantitatively).
Robust linear temporal logic (rLTL) extends the binary view of LTL to a 5-valued semantics to capture different levels of property satisfaction \cite{Tabuada:2016}.
This work is tangent to ours as it focuses on specifying robustness.

In \cite{Kang:2020,Zhang:2020}, the authors define robustness as a set of environmental behaviors for which a software system can guarantee safety.
Defining robustness in the semantic domain--i.e. in terms of behaviors--implicitly describes safe environmental deviations.
Our notion of robustness captures safe environmental deviations explicitly in terms of transitions, which offer both syntactic (transitions) and semantic (implied behaviors) information.
Transition-based robustness also allows us to capture the safe environmental envelopes of a system; it is not clear how one might efficiently capture this information with only behaviors.

In \cite{Meira-Goes:2021arxiv}, the authors define robustness also based on additional transitions to the environment. 
Their definition of robustness compares the perturbed controlled behavior, i.e., $beh(E_d|f)$, instead of directly comparing the additional transitions.
In this manner, the partial order used to define robustness in \cite{Meira-Goes:2021arxiv} is different from our notion of robustness.
Moreover, only an efficient algorithm for invariance properties is presented.
Extending the work in \cite{Meira-Goes:2021arxiv}, the authors explore the relationship between controller robustness and permissiveness for invariance properties \cite{Meira-Goes:2022wodes}.

Robust control in discrete event systems is also an active area of research \cite{Alves:2019,Cury:1999,Lin:1993,Lin:2014,Lin:2019cdc,Meira-Goes:2019cdc,Meira-Goes:2021tac-robust,Paoli:2005,Rohloff:2012,Takai:2004,Wang:2016,Young:1995}.
However, they usually deal with specific types of faults such as  communication delays, loss of information, or deception attacks \cite{Alves:2019,Lin:2014,Lin:2019cdc,Meira-Goes:2019cdc,Meira-Goes:2021tac-robust,Rohloff:2012,Wang:2016}.
We capture model uncertainty with our robustness definition, which can be attributed to these faults.
Robustness against model uncertainty is tackled in the works of \cite{Cury:1999,Lin:1993,Takai:2004,Young:1995}.
In these works, deviations are modeled by the behavior generated by the environment.
On the other hand, we modeled deviations by the inclusion of extra transitions.
In \cite{DIppolito:2012}, a controller realizability problem is studied for environments modeled as modal transition systems, where a controller satisfies a property in all, some, or none of the LTS family.
Our notion of robustness explicitly computes which systems in the LTS family satisfy the property.

Lastly, robustness also relates to fault-tolerance.
Fault-tolerance has been studied in the context of distributed systems \cite{Gartner:1999,Lynch:1996,Pease:1980}.
In \cite{Bonakdarpour:2008,Cheng:2011,Ebnenasir:2005,Girault:2009}, synthesis of fault-tolerant programs by retrofitting initial fault-intolerant programs.
These works focus on specific types of fault models, whereas our robustness model computes the safety envelope the controller is robust against. 

\section{Conclusion} \label{sect:conclusion}

In this paper, we introduced a new notion of robustness against environmental deviations for discrete-state transition systems.
Our notion of robustness is syntactically defined by additional transitions and semantically defined by the controlled behavior generated by these additional transitions.
We provided two methods to compute robustness: a brute-force algorithm, and an algorithm based on a controller synthesis problem.
We implemented these methods in a prototype tool which we used to analyze several case studies. 
In these case studies, we demonstrated that our robustness analysis provides crucial information by identifying the environmental envelopes in which the system can guarantee its safety properties.

As part of future work, we plan to extend our work to investigate robustness in the context of partially observable systems as well as in stochastic systems such as Markov decision processes (MDPs).
We also plan to investigate the benefit of considering additional environmental states--as well as additional transitions--in our robustness analysis.
Finally, we plan to extend our work beyond safety properties, e.g. including liveness.

\section*{Acknowledgements} This project was supported by the US NSF Awards CCF-2144860, CNS-1801342, CNS-1801546, CCF-1918140, and ECCS-2144416.
\bibliographystyle{splncs04}
\bibliography{bib_romulo.bib}

\begin{thebibliography}{10}
\providecommand{\url}[1]{\texttt{#1}}
\providecommand{\urlprefix}{URL }
\providecommand{\doi}[1]{https://doi.org/#1}

\bibitem{Alves:2019}
Alves, M.V.S., da~Cunha, A.E.C., Carvalho, L.K., Moreira, M.V., Basilio, J.C.:
  Robust supervisory control of discrete event systems against intermittent
  loss of observations. International Journal of Control pp. 1--13 (2019)

\bibitem{Baier:2008}
Baier, C., Katoen, J.P.: Principles of Model Checking. The MIT Press (2008)

\bibitem{Bloem:2014}
Bloem, R., Chatterjee, K., Greimel, K., Henzinger, T.A., Hofferek, G.,
  Jobstmann, B., K\"{o}nighofer, B., K\"{o}nighofer, R.: Synthesizing robust
  systems. Acta Inf.  \textbf{51}(3–4),  193–220 (Jun 2014)

\bibitem{Bloem:2009}
Bloem, R., Greimel, K., Henzinger, T.A., Jobstmann, B.: Synthesizing robust
  systems. In: 2009 Formal Methods in Computer-Aided Design. pp. 85--92 (2009)

\bibitem{Bolton:2011}
Bolton, M.L., Bass, E.J.: Evaluating human-automation interaction using task
  analytic behavior models, strategic knowledge-based erroneous human behavior
  generation, and model checking. In: 2011 IEEE International Conference on
  Systems, Man, and Cybernetics. pp. 1788--1794 (2011).
  \doi{10.1109/ICSMC.2011.6083931}

\bibitem{Bonakdarpour:2008}
Bonakdarpour, B., Kulkarni, S.S.: Sycraft: A tool for synthesizing distributed
  fault-tolerant programs. In: van Breugel, F., Chechik, M. (eds.) CONCUR 2008
  - Concurrency Theory. pp. 167--171. Springer Berlin Heidelberg (2008)

\bibitem{Lafortune:2021}
Cassandras, C.G., Lafortune, S.: Introduction to Discrete Event Systems.
  Springer, Cham, 3rd edn. (2021)

\bibitem{Chaudhuri:2011}
Chaudhuri, S., Gulwani, S., Lublinerman, R., Navidpour, S.: Proving programs
  robust. In: Proceedings of the 19th ACM SIGSOFT Symposium and the 13th
  European Conference on Foundations of Software Engineering. p. 102–112.
  ESEC/FSE '11, Association for Computing Machinery (2011)

\bibitem{Cheng:2011}
Cheng, C.H., Rue{\ss}, H., Knoll, A., Buckl, C.: Synthesis of fault-tolerant
  embedded systems using games: From theory to practice. In: Jhala, R.,
  Schmidt, D. (eds.) Verification, Model Checking, and Abstract Interpretation.
  pp. 118--133. Springer Berlin Heidelberg (2011)

\bibitem{Cury:1999}
Cury, J., Krogh, B.: Robustness of supervisors for discrete-event systems. IEEE
  Transactions on Automatic Control  \textbf{44}(2),  376--379 (1999)

\bibitem{DIppolito:2012}
D'Ippolito, N., Braberman, V., Piterman, N., Uchitel, S.: The modal transition
  system control problem. In: Giannakopoulou, D., M{\'e}ry, D. (eds.) FM 2012:
  Formal Methods. pp. 155--170. Springer Berlin Heidelberg (2012)

\bibitem{Ebnenasir:2005}
Ebnenasir, A., Kulkarni, S.S., Arora, A.: {FTSyn}: A framework for automatic
  synthesis of fault-tolerance. Int. J. Softw. Tools Technol. Transf.
  \textbf{10}(5),  455–471 (Oct 2008)

\bibitem{Gartner:1999}
G\"{a}rtner, F.C.: Fundamentals of fault-tolerant distributed computing in
  asynchronous environments. ACM Comput. Surv.  \textbf{31}(1),  1–26 (Mar
  1999)

\bibitem{Girault:2009}
Girault, A., Rutten, E.: {Automating the Addition of Fault Tolerance with
  Discrete Controller Synthesis}. {Formal Methods in System Design}
  \textbf{35},  190--225 (2009)

\bibitem{Gradel:2002}
Gr\"{a}del, E., Thomas, W., Wilke, T. (eds.): Automata Logics, and Infinite
  Games: A Guide to Current Research. Springer-Verlag, Berlin, Heidelberg
  (2002)

\bibitem{Henzinger:2014}
Henzinger, T.A., Otop, J., Samanta, R.: {Lipschitz Robustness of Finite-state
  Transducers}. In: Raman, V., Suresh, S.P. (eds.) 34th International
  Conference on Foundation of Software Technology and Theoretical Computer
  Science (FSTTCS 2014). Leibniz International Proceedings in Informatics
  (LIPIcs), vol.~29, pp. 431--443. Schloss Dagstuhl--Leibniz-Zentrum fuer
  Informatik, Dagstuhl, Germany (2014)

\bibitem{Kang:2020}
Kang, E.: Robustness analysis for secure software design. In: Proceedings of
  the 3rd ACM SIGSOFT International Workshop on Software Security from Design
  to Deployment. p. 19–25. SEAD 2020, Association for Computing Machinery
  (2020)

\bibitem{Leveson:1993}
Leveson, N., Turner, C.: An investigation of the therac-25 accidents. Computer
  \textbf{26}(7),  18--41 (1993). \doi{10.1109/MC.1993.274940}

\bibitem{Lin:1993}
Lin, F.: Robust and adaptive supervisory control of discrete event systems.
  IEEE Transactions on Automatic Control  \textbf{38}(12),  1848--1852 (Dec
  1993)

\bibitem{Lin:2014}
Lin, F.: Control of networked discrete event systems: Dealing with
  communication delays and losses. SIAM Journal on Control and Optimization
  \textbf{52}(2),  1276--1298 (2014)

\bibitem{Lin:2019cdc}
{Lin}, L., {Zhu}, Y., {Su}, R.: Towards bounded synthesis of resilient
  supervisors. In: 2019 IEEE 58th Conference on Decision and Control (CDC). pp.
  7659--7664 (2019)

\bibitem{Lynch:1996}
Lynch, N.A.: Distributed Algorithms. Morgan Kaufmann Publishers Inc., San
  Francisco, CA, USA (1996)

\bibitem{Magee2000}
Magee, J., Kramer, J.: Concurrency: State Models and Java Programs. John Wiley
  and Sons, Inc., USA (2000)

\bibitem{Majumdar:2011}
Majumdar, R., Render, E., Tabuada, P.: Robust discrete synthesis against
  unspecified disturbances. In: Proceedings of the 14th International
  Conference on Hybrid Systems: Computation and Control. p. 211–220. HSCC
  '11, Association for Computing Machinery (2011)

\bibitem{McNaughton:1993}
McNaughton, R.: Infinite games played on finite graphs. Annals of Pure and
  Applied Logic  \textbf{65}(2),  149--184 (1993)

\bibitem{Meira-Goes:2019cdc}
{Meira-G\'oes}, R., Marchand, H., Lafortune, S.: Towards resilient supervisors
  against sensor deception attacks. In: 2019 IEEE 58th Annual Conference on
  Decision and Control (CDC) (Dec 2019)

\bibitem{tr-rob-github}
Meira-Goes, R., Dardik, I., Kang, E., Lafortune, S., Tripakis, S.: Transitional
  robustness github repository (2023),
  \url{https://github.com/cmu-soda/transitional-robustness}, accessed on May
  29, 2023

\bibitem{Meira-Goes:2021arxiv}
Meira-G\'oes, R., Kang, E., Lafortune, S., Tripakis, S.: On tolerance of
  discrete systems with respect to transition perturbations. arXiv:2110.04200
  [eess.SY]  (2021)

\bibitem{Meira-Goes:2022wodes}
Meira-G{\'o}es, R., Kang, E., Lafortune, S., Tripakis, S.: On synthesizing
  tolerable and permissive controllers for labeled transition systems. 16th
  IFAC Workshop on Discrete Event Systems WODES 2022  \textbf{55}(28),
  158--164 (2022)

\bibitem{Meira-Goes:2021tac-robust}
Meira-Goes, R., Lafortune, S., Marchand, H.: Synthesis of supervisors robust
  against sensor deception attacks. IEEE Transactions on Automatic Control
  \textbf{66}(10),  4990--4997 (2021)

\bibitem{Neider:2020}
Neider, D., Weinert, A., Zimmermann, M.: Synthesizing optimally resilient
  controllers. Acta Inf.  \textbf{57}(1),  195–221 (Apr 2020)

\bibitem{Paoli:2005}
Paoli, A., Lafortune, S.: Safe diagnosability for fault-tolerant supervision of
  discrete-event systems. Automatica  \textbf{41}(8),  1335--1347 (Aug 2005)

\bibitem{Pease:1980}
Pease, M., Shostak, R., Lamport, L.: Reaching agreement in the presence of
  faults. J. ACM  \textbf{27}(2),  228–234 (Apr 1980)

\bibitem{Pnueli:1989a}
Pnueli, A., Rosner, R.: On the synthesis of a reactive module. In: Proceedings
  of the 16th ACM SIGPLAN-SIGACT Symposium on Principles of Programming
  Languages. p. 179–190. POPL '89, Association for Computing Machinery (1989)

\bibitem{Pnueli:1977}
Pnueli, A.: The temporal logic of programs. In: 18th Annual Symposium on
  Foundations of Computer Science (sfcs 1977). pp. 46--57 (1977)

\bibitem{Ramadge:1987}
Ramadge, P.J., Wonham, W.M.: Supervisory control of a class of discrete event
  processes. SIAM J. Control Optim.  \textbf{25}(1),  206--230 (Jan 1987)

\bibitem{fbi_2010}
\relax{U.S. Attorney’s Office Eastern District of Kentucky}: Clay county
  officials and residents convicted on racketeering and voter fraud charges
  (Mar 2010),
  \url{https://archives.fbi.gov/archives/louisville/press-releases/2010/lo032510.htm}

\bibitem{Rohloff:2012}
Rohloff, K.: Bounded sensor failure tolerant supervisory control. 11th IFAC
  Workshop on Discrete Event Systems  \textbf{45}(29),  272 -- 277 (2012)

\bibitem{Roopsha:2013}
Samanta, R., Deshmukh, J.V., Chaudhuri, S.: Robustness analysis of string
  transducers. In: Van~Hung, D., Ogawa, M. (eds.) Automated Technology for
  Verification and Analysis. pp. 427--441. Springer Publishing Company (2013)

\bibitem{Sempreboni:2020}
Sempreboni, D., Viganò, L.: X-men: A mutation-based approach for the formal
  analysis of security ceremonies. In: 2020 IEEE European Symposium on Security
  and Privacy (EuroS\&P). pp. 87--104 (2020).
  \doi{10.1109/EuroSP48549.2020.00014}

\bibitem{Tabuada:2012}
Tabuada, P., Balkan, A., Caliskan, S.Y., Shoukry, Y., Majumdar, R.:
  Input-output robustness for discrete systems. In: Proceedings of the Tenth
  ACM International Conference on Embedded Software. p. 217–226. EMSOFT '12,
  Association for Computing Machinery (2012)

\bibitem{Tabuada:2016}
Tabuada, P., Neider, D.: {Robust Linear Temporal Logic}. In: Talbot, J.M.,
  Regnier, L. (eds.) 25th EACSL Annual Conference on Computer Science Logic
  (CSL 2016). Leibniz International Proceedings in Informatics (LIPIcs),
  vol.~62, pp. 10:1--10:21. Schloss Dagstuhl--Leibniz-Zentrum fuer Informatik,
  Dagstuhl, Germany (2016)

\bibitem{Takai:2004}
Takai, S.: Maximizing robustness of supervisors for partially observed discrete
  event systems. Automatica  \textbf{40}(3),  531 -- 535 (2004)

\bibitem{Topcu:2012}
Topcu, U., Ozay, N., Liu, J., Murray, R.M.: On synthesizing robust discrete
  controllers under modeling uncertainty. In: Proceedings of the 15th ACM
  International Conference on Hybrid Systems: Computation and Control. p.
  85–94. HSCC '12, Association for Computing Machinery (2012)

\bibitem{Tun2020}
Tun, T.T., Bennaceur, A., Nuseibeh, B.: Oasis: Weakening user obligations for
  security-critical systems. In: 2020 IEEE 28th International Requirements
  Engineering Conference (RE). pp. 113--124 (2020).
  \doi{10.1109/RE48521.2020.00023}

\bibitem{Wang:2016}
{Wang}, F., {Shu}, S., {Lin}, F.: Robust networked control of discrete event
  systems. IEEE Transactions on Automation Science and Engineering
  \textbf{13}(4),  1528--1540 (2016)

\bibitem{Young:1995}
Young, S., Garg, V.K.: Model uncertainty in discrete event systems. SIAM
  Journal on Control and Optimization  \textbf{33}(1),  208--226 (1995)

\bibitem{Zhang:2020}
Zhang, C., Garlan, D., Kang, E.: A behavioral notion of robustness for software
  systems. In: Proceedings of the 28th ACM Joint Meeting on European Software
  Engineering Conference and Symposium on the Foundations of Software
  Engineering. p. 1–12. ESEC/FSE 2020, Association for Computing Machinery
  (2020)

\end{thebibliography}

\longfalse
\longtrue
\iflong 
\appendix
\section*{Appendix}
\subsection*{Uniqueness of $\Delta$}
\lemmarobustnessuniqueness*
\begin{proof}
By contradiction.
Assume that there exist $\Delta_1,\Delta_2 \subseteq 2^{Q_E\times Act_E\times Q_E}$ such that they satisfy conditions 1, 2, 3, and 4 in Def.~\ref{def:robustness} and $\Delta_1 \neq \Delta_2$.
Without loss of generality, we assume that $\exists d_1\in \Delta_1\setminus\Delta_2$.
Since $d_1\in \Delta_1$, we have that $E_{d_1}/C\models P_{saf}$ and $E_{d_1}\models P_{env}$ as $\Delta_1$ satisfies 1 and 4.
As $\Delta_2$ satisfies 2 and $d_1\notin \Delta_2$, we have that $\exists d_2\in \Delta_2$ such that $E_{d_2}/C\models P_{saf}$,  $E_{d_2}\models P_{env}$, and $d_1\subseteq d_2$.
Since $d_1\in \Delta_1\setminus\Delta_2$, it follows that $d_1\subseteq d_2$.
Back to $\Delta_1$, condition 3 implies that $d_2\notin\Delta_1$ since $d_1\in \Delta_1$ and $d_1\subseteq d_2$.
Furthermore, it does not exist $d\in \Delta_1$ such that $d_2\subseteq d$, because $d_1 \in \Delta_1$ and $d_1\subseteq d_2\subseteq d$, and $\Delta_1$ satisfies condition 3.
Consequently, the deviation $d_2$ is a witness of the $\Delta_1$ violating condition 2, which contradicts our assumption that $\Delta_1$ satisfies conditions 1, 2, 3, and 4.
\end{proof}

\subsection*{Solving safety games}
We now briefly describe how to solve the safety game that provides solutions to Problem~\ref{prob:meta-control}.
We define the  \emph{invariant} of the set of states $S\subseteq Q_F$, denoted as $Inv(S)$.
Intuitively, the invariant of the set of states $S\subseteq Q_F$ contains exactly the states from which the meta-controller can force the meta-system to remain within $S$.
Formally, the invariant is defined recursively using sets $Inv^i(S)$ from which the meta-controller can enforce retention in $S$ for at most $i\geq 0$ steps.
\begin{equation}
Inv^0(S) := S
\end{equation}
\begin{equation}
Inv^i(S) := Inv^{i-1}(S)\ \cap\ \{q\in Q_F\mid R_F \downharpoonright  E(q) \subseteq Inv^{i-1}(S)\}
\end{equation}
and
\begin{equation}
Inv(S) := \bigcap_{i\geq 0} Inv^i(S)
\end{equation}
where $R_F\downharpoonright E\big((q_e,q_c,q_p)\big) =\{(q'_e,q'_c,q'_p)\in Q_F \mid \exists ((q_e,q_c,q_p), a, (q'_e,q'_c,q'_p))\in R_F \text{ s.t. }\allowbreak (q_e,a,q'_e)\in R_E\}$ describes the controllable successor states with respect to $R_E$.

\subsection*{Proof of Theorem~\ref{theo:robustness}}
To prove Theorem~\ref{alg:tol-control}, let us assume the definitions given in Alg.~\ref{alg:tol-control} and Sect.~\ref{subsect:unconstraint}.

\begin{lemma}
Let environment $E$, controller $C$, safety property $P_{saf}$, and environmental property $P_{env} = Act_E^*$ be given.
For any set of states $S\subseteq W$, the deviation $d = A\setminus del $ is robust, where $W = Inv(Q_F\setminus Err)$, $A = Q_E\times Act_E\times Q_E$ and $del$ is defined in Eq.~\ref{eq:del}.
\end{lemma}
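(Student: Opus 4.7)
The plan is to show that any trace of the closed-loop system $E_d / C$ remains in the set $Q_T$ of the meta-controller, which contains no error states by construction, so safety is preserved.

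First, I would unpack the structural properties of $T$. By construction $Q_T = Inv(S)$, and since $S \subseteq W = Inv(Q_F \setminus Err)$ and $Inv(S) \subseteq S$, we get $Q_T \subseteq Q_F \setminus Err$, so no error state lies in $Q_T$. The crucial monotonicity property from the fixed-point definition of $Inv$ gives us: for every transition $((q_e, q_c, q_p), a, (q_e', q_c', q_p')) \in R_F$ whose environment-projection $(q_e, a, q_e')$ lies in $R_E$, if $(q_e, q_c, q_p) \in Q_T$ then $(q_e', q_c', q_p') \in Q_T$ as well. Consequently, no transition in $R_F \setminus R_T$ with source in $Q_T$ has its $E$-projection in $R_E$, which means $del \subseteq A \setminus R_E$ and therefore $d = A \setminus del \supseteq R_E$; in particular $R_{E_d} = R_E \cup d \subseteq A$, and every transition of $E_d \| C \| P_{saf}$ is also a transition of $F = E_A \| C \| P_{saf}$.

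Next I would establish, by induction on trace length, that every state reached in $E_d \| C \| P_{saf}$ lies in $Q_T$. For the base case, the initial state $q_{0,F}$ lies in $Q_T$ (in the edge case where $q_{0,F} \notin Inv(S)$, the Meta-Controller routine yields $T = \emptyset$, hence $del = \emptyset$ and $d = A$; this degenerate situation is handled separately, and in any non-trivial run the initial state is in $Q_T$). For the inductive step, suppose we reach some $(q_e, q_c, q_p) \in Q_T$ and fire a transition to $(q_e', q_c', q_p')$ via action $a$ with $(q_e, a, q_e') \in R_E \cup d$. If $(q_e, a, q_e') \in R_E$, the monotonicity of $Inv$ noted above forces $(q_e', q_c', q_p') \in Q_T$. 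If $(q_e, a, q_e') \in d \setminus R_E$, then suppose for contradiction the successor were not in $Q_T$; the corresponding $F$-transition would lie in $R_F \setminus R_T$ with source in $Q_T$, so by the definition of $del$ we would have $(q_e, a, q_e') \in del$, contradicting $(q_e, a, q_e') \in d = A \setminus del$.

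Concluding from the induction, the reachable states of $E_d \| C \| P_{saf}$ are all contained in $Q_T$ and thus disjoint from $Err$, which is exactly the statement $E_d / C \models P_{saf}$, i.e., $d$ is a robust deviation.

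The main obstacle I anticipate is the bookkeeping in the inductive step: one must carefully relate the synchronized transitions of $E_d \| C \| P_{saf}$ to those of the maximally deviated meta-system $F$, and exploit the two-clause definition of the meta-control problem (Problem~\ref{prob:meta-control}) to guarantee that only deviation-induced (non-$R_E$) transitions can ever leave $Q_T$. The edge case $q_{0,F} \notin Inv(S)$ requires a small separate argument, but all the substantive content of the proof lies in the $R_E$-closure property of $Inv$ combined with the projection definition of $del$.
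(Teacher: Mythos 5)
Your proof takes essentially the same route as the paper's: both argue that the composition $E_d||C||P_{saf}$ remains inside the meta-controller $T$, whose state set avoids $Err$ by construction, so every trace of the deviated closed-loop system is safe. Your explicit induction (splitting on whether a fired transition comes from $R_E$, where the $Inv$ fixed point forces closure, or from $d\setminus R_E$, where membership in $d=A\setminus del$ forces the successor to stay in $Q_T$) simply fleshes out the one-line containment claim ``$E_d||C||P_{saf}\subseteq T$'' in the paper, and the edge case $q_{0,F}\notin Inv(S)$ that you flag (where $T=\emptyset$, $del=\emptyset$, and $d=A$ need not be robust) is a gap the paper's own proof silently shares rather than one you introduced.
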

\begin{proof}
Let $T$ be the meta-controller found by computing $Meta-Controller(S,F)$.
Note that $T\subseteq F$ and that $T$ satisfy Problem~\ref{prob:meta-control}.
By construction of $d$, we have that $E_d||C||P_{saf}\subseteq T$ since $d$ is constructed based on $T$.  
It follows that for any $x_0\dots x_n \in beh(E_d||C||P_{saf})$, we have that $x_n\notin Err$.
Therefore, $d$ is robust.
\end{proof}

\begin{lemma}
Let environment $E$, controller $C$, safety property $P_{saf}$, and environmental property $P_{env} = Act_E^*$ be given.
For any deviation $d\in \Delta$, there exists a set of states $S\subseteq W$ such that $d= A\setminus del$ where $del$ is defined in Eq.~\ref{eq:del} with set $S$.
\end{lemma}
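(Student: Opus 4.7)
Given a maximal robust deviation $d \in \Delta$, the natural candidate for $S$ is the set of states of $F$ that are reachable in the deviated closed-loop $E_d \| C \| P_{saf}$, viewed as a sub-LTS of $F = E_A \| C \| P_{saf}$. The plan is to show that this $S$ lies inside $W$, that the meta-controller extracted from $S$ exactly recovers the closed-loop behavior of $E_d$, and finally that the resulting $A \setminus del$ coincides with $d$ by invoking maximality together with the first lemma.

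\textbf{Step 1: Place $S$ inside $W$.} Since $d \in \Delta$ is robust, $E_d \| C \models P_{saf}$, so no error state is reachable in $E_d \| C \| P_{saf}$; hence $S \subseteq Q_F \setminus Err$. Because $R_E \subseteq R_E \cup d = R_{E_d}$, every $R_E$-successor of a state in $S$ is reachable in $E_d \| C \| P_{saf}$ and thus lies in $S$. In other words $S$ is $R_E$-closed inside $Q_F \setminus Err$. Since $W = Inv(Q_F \setminus Err)$ is, by the fixpoint definition in the appendix, the largest such $R_E$-closed subset, we obtain $S \subseteq W$.

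\textbf{Step 2: Analyze $\textsc{Meta-Controller}(S, F)$.} By the same $R_E$-closure argument, $Inv^i(S) = S$ for every $i$, so $Inv(S) = S$. Consequently, the meta-controller $T$ returned by $\textsc{Meta-Controller}(S, F)$ has $Q_T = S$ and $R_T = \{(q,a,q') \in R_F \mid q, q' \in S\}$. Note that the initial state $q_{0,F}$ belongs to $S$ (as the initial state of $E_d \| C \| P_{saf}$ coincides with $q_{0,F}$), so we do not fall into the empty-$T$ branch of the procedure.

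\textbf{Step 3: Show $d \subseteq A \setminus del$.} Suppose toward contradiction that $(q_e, a, q_e') \in d \cap del$. By the definition of $del$, there exist $(q_c, q_p), (q_c', q_p')$ with $((q_e, q_c, q_p), a, (q_e', q_c', q_p')) \in R_F$, $(q_e, q_c, q_p) \in S$, and $((q_e, q_c, q_p), a, (q_e', q_c', q_p')) \notin R_T$; by Step 2 the latter forces $(q_e', q_c', q_p') \notin S$. But $(q_e, a, q_e') \in d \subseteq R_{E_d}$ and $(q_e, q_c, q_p) \in S$ is reachable in $E_d \| C \| P_{saf}$, so the successor $(q_e', q_c', q_p')$ is also reachable in $E_d \| C \| P_{saf}$ and therefore lies in $S$, a contradiction.

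\textbf{Step 4: Conclude $d = A \setminus del$ by maximality.} By the first lemma (which applies because $S \subseteq W$ from Step~1), $d^\star := A \setminus del$ is a robust deviation. Step~3 gives $d \subseteq d^\star$. Since $P_{env} = Act_E^\ast$, the feasibility condition in Def.~\ref{def:robustness} is trivial, so condition~2 of Def.~\ref{def:robustness} furnishes $d'' \in \Delta$ with $d^\star \subseteq d''$, hence $d \subseteq d''$. Condition~3 of Def.~\ref{def:robustness} then forces $d = d''$, which yields $d^\star \subseteq d$ and therefore $d = d^\star = A \setminus del$, as required.

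\textbf{Main obstacle.} The only subtlety I expect is verifying that the chosen $S$ really is $R_E$-closed in $F$ and hence a valid input to $\textsc{Meta-Controller}$ with $Inv(S) = S$; this is where one must carefully distinguish transitions of $F$ that come from $R_E$ from those that come from the extra deviation transitions in $A \setminus R_E$. Everything else is a bookkeeping consequence of the constructions in Alg.~\ref{alg:tol-control} together with the maximality clauses of Def.~\ref{def:robustness}.
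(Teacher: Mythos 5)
Your proposal is correct and follows essentially the same route as the paper: you pick $S$ to be the reachable state set of $E_d\|C\|P_{saf}$, show it is $R_E$-closed and hence contained in $W$ with $Inv(S)=S$, and deduce $d\cap del=\emptyset$ from the fact that the extracted meta-controller retains all transitions between states of $S$. The only (minor) difference is at the end: the paper argues the reverse inclusion $A\setminus del\subseteq d$ directly from the structure of $T$ and $M$, whereas you obtain it more cleanly by combining the first lemma with the maximality conditions 2 and 3 of Def.~\ref{def:robustness}, which makes explicit the use of $d\in\Delta$ that the paper's argument relies on implicitly.
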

\begin{proof}
Let $M = E_d||C||P_{saf}$, we show that $d = A\setminus del$ where $del$ is computed based on $Q_{M}$.
Note that $Q_M \subseteq W$ since $d$ is robust.
Let $T = Meta-Controller(Q_M,F)$, we show that $Q_M = Q_T$ and $R_M\subseteq R_T$.

First, we show that $Q_M = Q_T$. 
We start by showing that $Inv(Q_M) = Q_M$.
By the definition of $Inv$, we have that $Inv(Q_M)\subseteq Q_M$.
Next, we demonstrate that $Q_M\subseteq Inv(Q_M)$.
Since $E_d$ contains all transitions in $E$, it follows that $R_F\downharpoonright E(Q_M)\subseteq Q_M$.
Therefore, it must be that $Q_M\subseteq Inv(Q_M)$.

In the second part, we show that $R_M\subseteq R_T$.
Intuitively, the transitions in $M$ are a subset of the one in $T$ since $T$ only removes the transitions that would leave $Q_M$.
For instance, obtaining the meta-controller in Fig.~\ref{fig:meta-system-safe} dictates the removal of transition $((3,C),a,err)$.
Projecting this transition to the environment translates to the removal of transition $(3,a,3)$.
If we compute the $E_d||C||P_{saf}$ for $d$ without transition $(3,a,3)$ results in a system without the transition $((3,C),a,(3,B)$ in addition to the removal of $((3,C),a,err)$.
Formally, in the construction of $T$, only transitions from $Q_T\times Act_F\times Q_F\setminus Q_T$ are removed.
Therefore it follows that $R_M\subseteq R_T$

Combining these two facts, we have that $d = A\setminus del$ computed based on $Q_M$.
The systems $T$ and $M$ have the same set of states and they remove the same transitions that leave $Q_M$ to $Q_F\setminus Q_M$.
Therefore, $d$ could only have more transitions that are related to transitions within $Q_M$, i.e., pictorially, they would be within the shaded area in Fig.~\ref{fig:meta-systems}.
Since the computation of $T$ only deletes transitions that leave the shaded area, it follows that those transitions are also part of $T$.
As a consequence, we have $A\setminus del \subseteq d$.
The direction $d\subseteq A\setminus del$ follows from the fact that $M\subseteq T$.
\end{proof}

\newpage
\subsection*{Case study figures}
\begin{figure}[thpb]
\centering
    \includegraphics[height=60px]{ 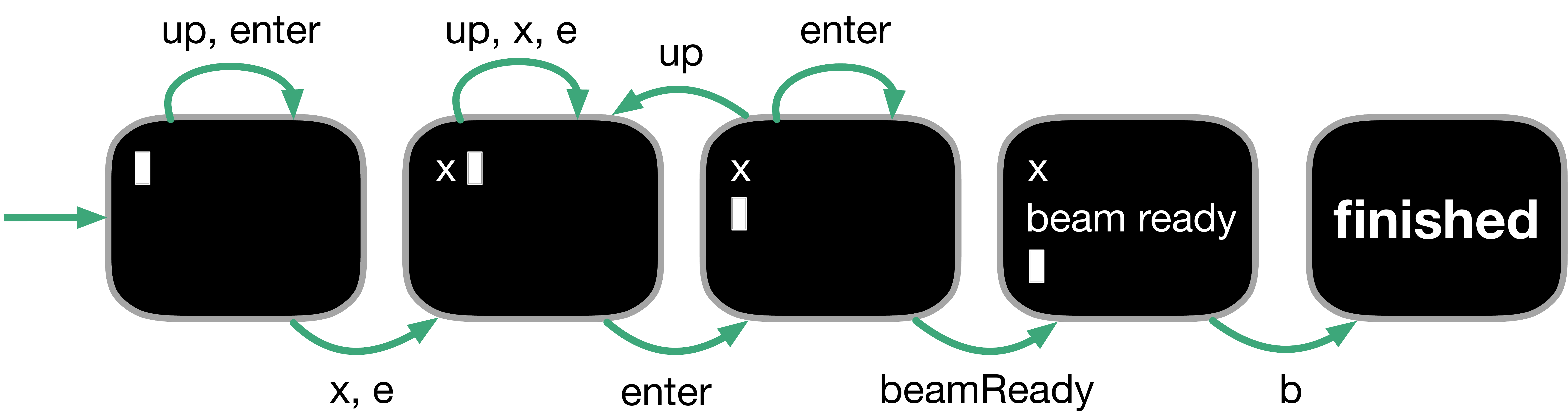}
    \caption{In the Therac-25 case study, $P_{env}$ restricts the environment to firing the beam at most once.}
    \label{fig:p-env-therac}
\end{figure}

\begin{figure}[thpb]
\begin{subfigure}[b]{0.32\textwidth}
    \centering
    \includegraphics[height=90px]{ 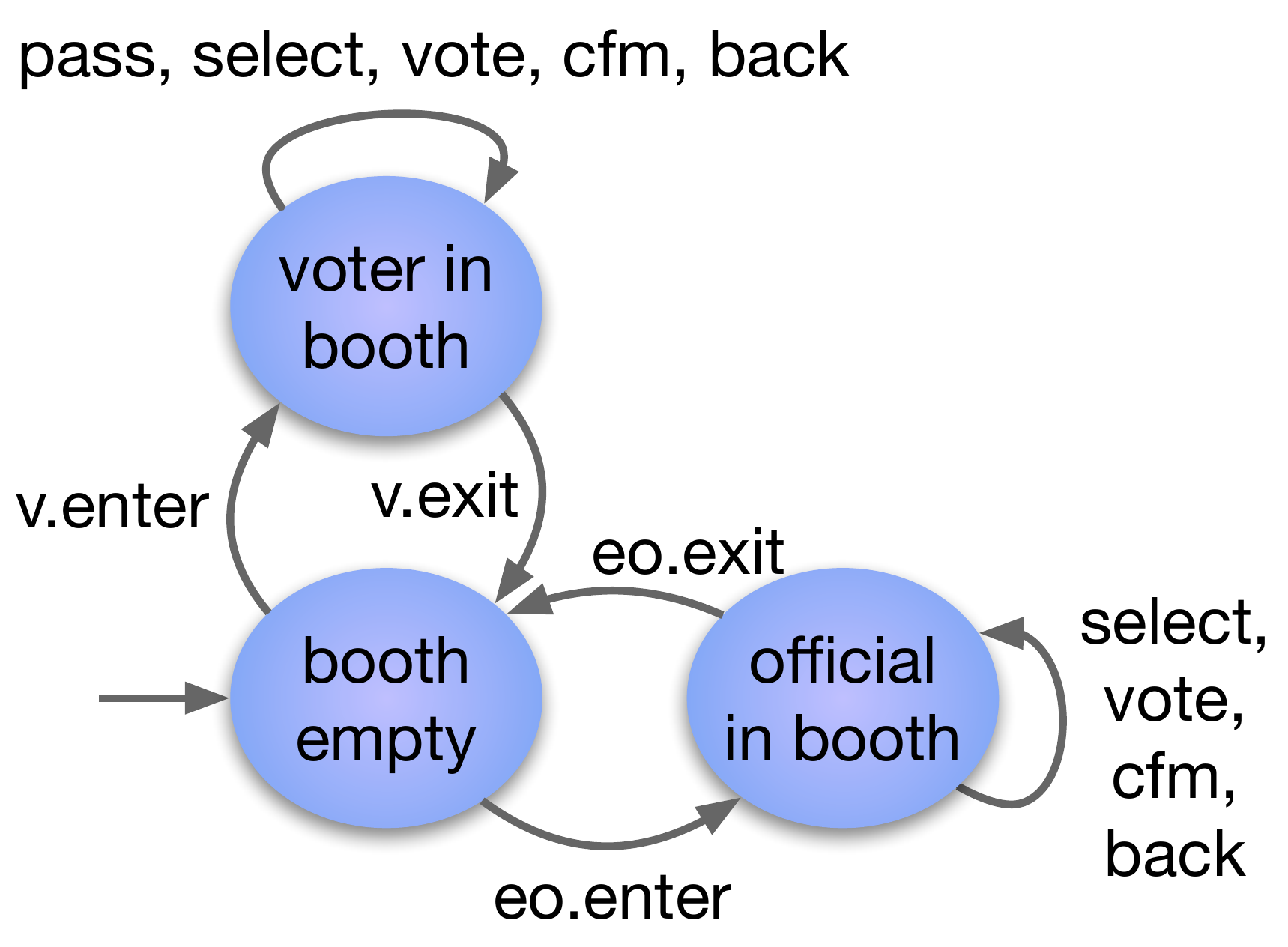}
    \caption{Voting booth model.
    }
    \label{fig:voting-booth}
\end{subfigure}
\enskip
\begin{subfigure}[b]{.65\textwidth}
    \centering
    \includegraphics[height=85px]{ 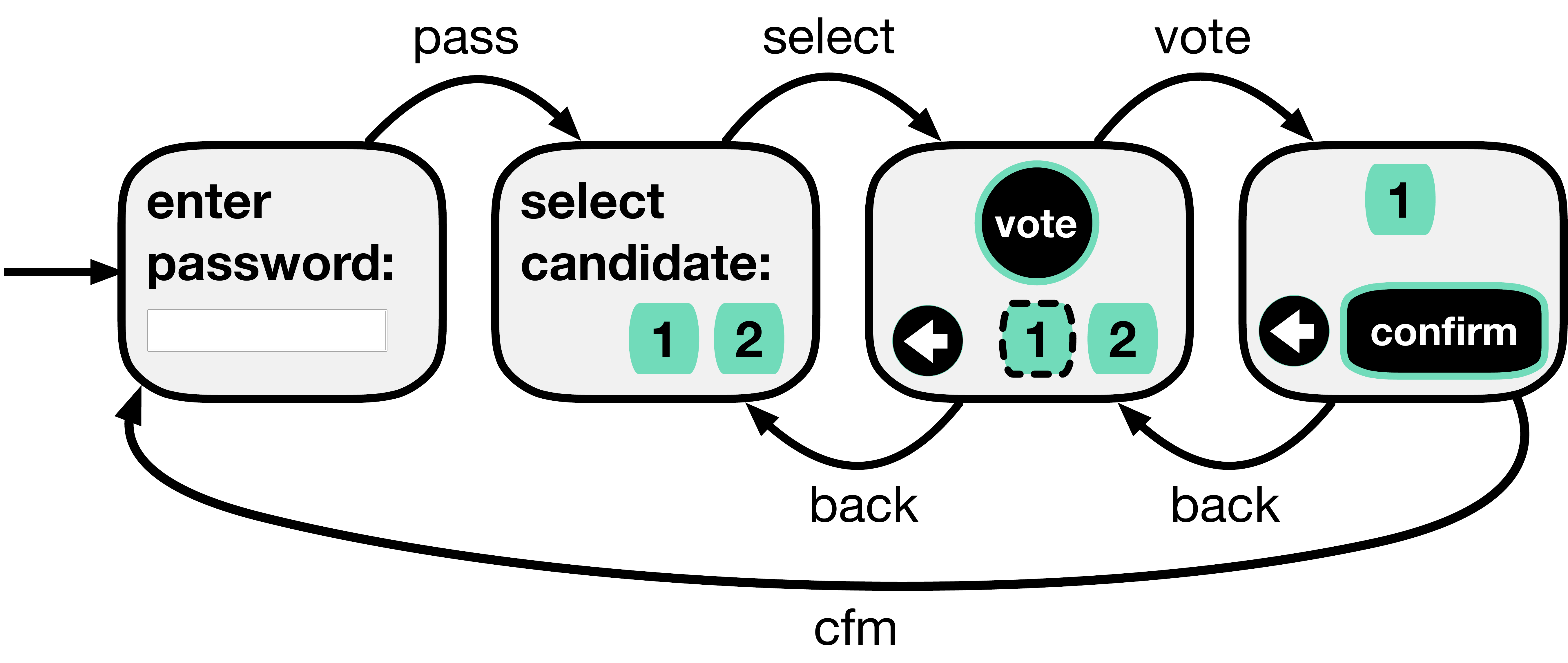}
    \caption{User interface for the voting machine.}
    \label{fig:voting-machine-screen}
\end{subfigure}

\caption{
Models for the voting machine example.
In the figures above, the prefix ``v'' represents actions by the voter, while the prefix ``eo'' represents actions by the election official.}
\end{figure}

\begin{figure}[thpb]
\centering
    \includegraphics[width=350px]{ 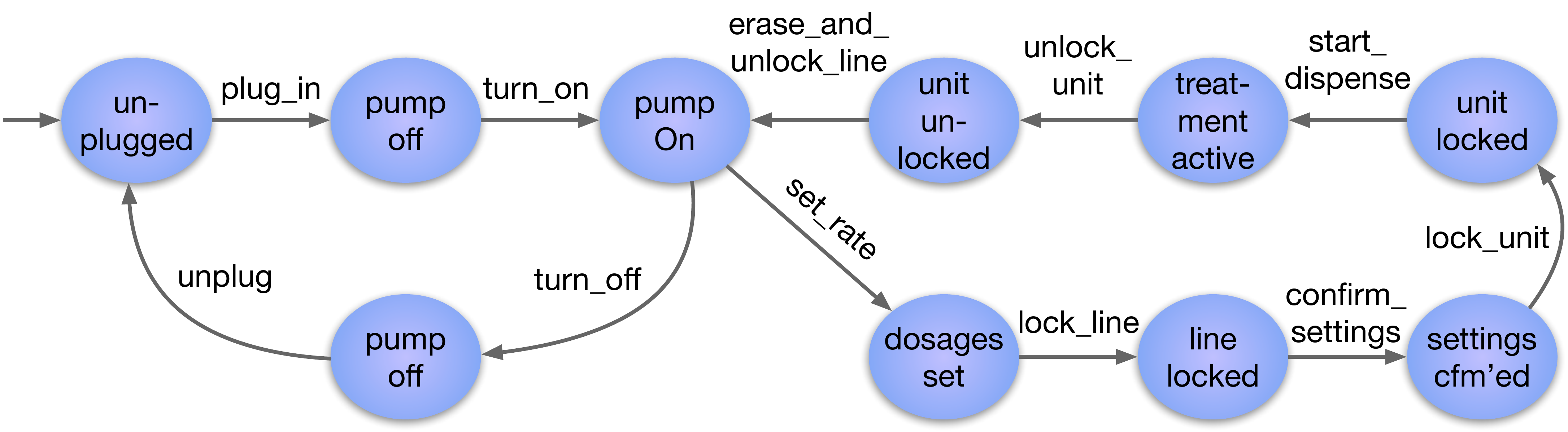}
    \caption{The normative environment for the PCA pump.}
    \label{fig:pump-env}
\end{figure}

\begin{figure}[thpb]
\centering
    \includegraphics[width=350px]{ 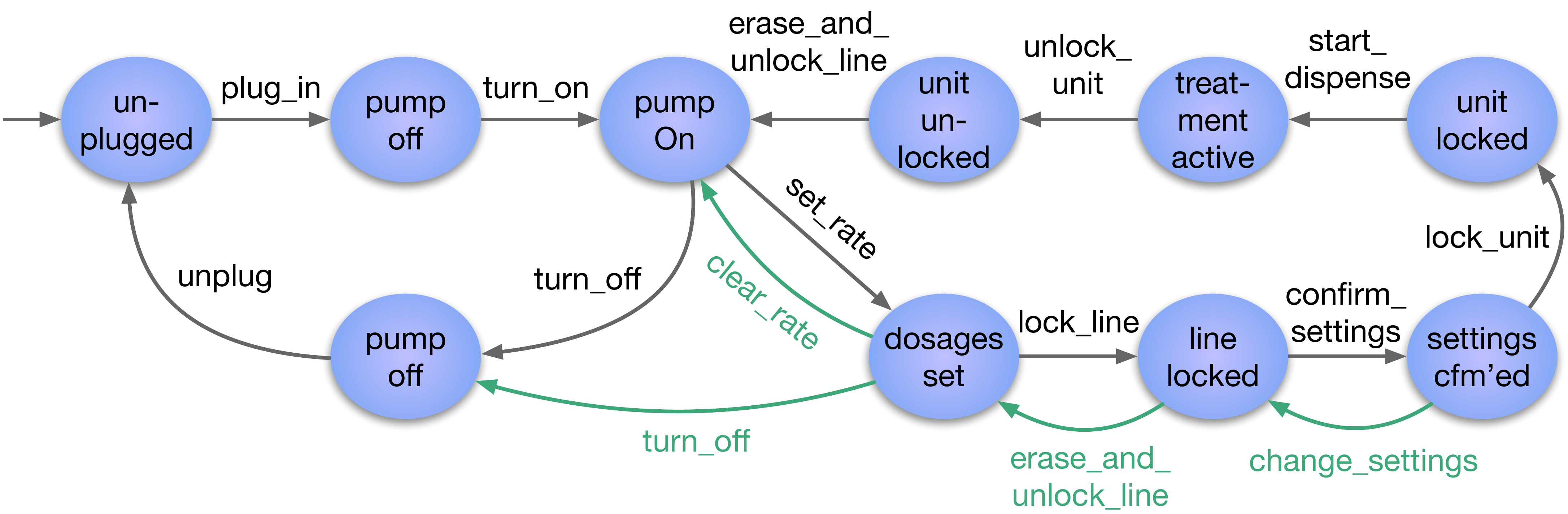}
    \caption{The sole maximal robust deviation for the PCA pump.
    Green arrows indicate transitions that are not in the normative environment.}
    \label{fig:pump-tol}
\end{figure}

\end{document}